\newtheorem{theorem}{Theorem}
\newtheorem{corollary}{Corollary}
\newtheorem{lemma}{Lemma}
\newtheorem{definition}{Definition}
\newtheorem{example}{Example}
\newtheorem{remark}{Remark}
\newcommand{\act}{\ensuremath A}
\newcommand{\step}[1]{\xrightarrow{#1}}
\newcommand{\fseq}[1]{{#1}^{*}}
\newcommand{\decS}{\mathcal{D}}
\newcommand{\xleftrightarrow}[2][]{\ext@arrow 3359\leftrightarrowfill@{#1}{#2}}
\newcommand{\xdasharrow}[2][->]{
\tikz[baseline=-\the\dimexpr\fontdimen22\textfont2\relax]{
\node[anchor=south,font=\scriptsize, inner ysep=1.5pt,outer xsep=2.2pt](x){#2};
\draw[shorten <=3.4pt,shorten >=3.4pt,dashed,#1](x.south west)--(x.south east);
}
}
\def\xtwoheadrightarrowfill@{%
 \arrowfill@\relbar\relbar{\rightarrow\mkern-15mu\rightarrow}}
\newcommand*\xtwoheadrightarrow[2][]{%
 \ext@arrow 0099\xtwoheadrightarrowfill@{#1}{#2}}
\newcommand{\steps}[1]{\xtwoheadrightarrow{#1}{}}
\newcommand{\eseq}{\varepsilon}
\newcommand{\prc}{\ensuremath{\mathbb S}}
\newcommand{\dia}[1]{\langle #1 \rangle}
\newcommand{\bbox}[1]{[ #1 ]}
\newcommand{\Implies}{\mathrel{\Rightarrow}}
\newcommand{\Par}{\mathrel{||}}
\newenvironment{todo}{\bigskip\hrule\medskip\noindent}{\medskip\hrule\bigskip}
\title{(De-)Composing Causality in Labeled Transition Systems}
\author{Georgiana Caltais
\institute{Department for Computer and Information Science
\\University of Konstanz, Germany}
\email{georgiana.caltais@uni-konstanz.de}
\and
Stefan Leue
\institute{Department for Computer and Information Science
\\University of Konstanz, Germany}
\email{stefan.leue@uni-konstanz.de}
\and
Mohammad Reza Mousavi
\institute{Centre for Research on Embedded Systems\\ Halmstad University, Sweden}
\email{m.r.mousavi@hh.se}
}
\begin{document}
\maketitle

\begin{abstract}
In this paper we introduce a notion of counterfactual causality 
in the Halpern and Pearl sense 
that is compositional with respect to the interleaving of transition systems.
The formal framework for reasoning on what caused the violation of a safety property is established in the context of labeled transition systems and Hennessy Milner logic.
The compositionality results are devised for non-communicating systems.
\end{abstract}

\section{Introduction}

Determining and computing causalities is a frequently addressed issue in 
the philosophy of science and engineering, for instance when causally relating system
faults to system failures. 
A notion of causality that is frequently used in relation to technical systems 
relies on counterfactual reasoning. 
Lewis~\cite{Lew73a} formulates the counterfactual argument, which defines when 
an event is considered a cause
for some effect, in the following way: a) whenever the event presumed to be a cause occurs, the effect occurs as well, and
b) when the presumed cause does not occur, the effect will not occur either (counterfactual argument). 
Counterfactual reasoning hence requires the consideration of alternative worlds: one world, corresponding
to one program or system execution in software and systems analysis, where both the cause and the effect 
occur, and another world in which neither the cause nor the effect occur. Cause and effect are 
assumed to be temporally ordered.

In their seminal paper~\cite{halpern2005causes}, Halpern and Pearl argue that the simple Lewis-style counterfactual argument
cannot explain causalities if the causes correspond to complex logical structures of multiple events. Halpern
and Pearl define a notion of complex logical events based on boolean equation systems and propose a number 
of conditions, called actual cause (AC) conditions, under which an event can be considered causal for an effect.
The AC conditions encompass a couterfactual argument.

The Halpern and Pearl model of actual causation has been related in various forms to computing systems.
Most relevant for our work is the work on causality checking~\cite{LeiLeu13d,DBLP:conf/vmcai/Leitner-FischerL13} 
which interprets the Halpern and Pearl event model and notion of actual causation 
in the context of the transition system and trace model for concurrent system computations. In addition to 
the Halpern and Pearl model, in causality checking the order of events as well as the non-occurrence of events
can be causal. An implementation of causality checking using explicit-state model checking~\cite{DBLP:conf/spin/Leitner-FischerL14} 
as well as 
SAT-based bounded model checking~\cite{DBLP:conf/spin/BeerHKLL15} have been provided. The causality checking approach
has been applied to various case studies in the area of analyzing critical systems for safety violations. 
In this setting, an ordered sequence of events is computed as being the actual cause of a safety 
property violation. In safety engineering the safety property violation is usually referred to as 
a hazard.
The computed causalities will be displayed as fault trees complemented by temporal logic formulae which specify the 
order in which causal events occur.



The objective of this paper is to consider the notion of counterfactual causality reasoning 
and actual causation in the context of labeled transition systems (LTS's). In our setting
the LTS's represent system models
and Hennessy Milner logic (HML)~\cite{DBLP:conf/icalp/HennessyM80} formulae specify the system
properties for whose violation actual causes are sought.  We also establish first results on computing causalities
in this setting using (de-)compositional verification.

Our notion of causality complies to the characteristics of "actual causation" proposed in~\cite{halpern2005causes} and further adapted to the setting of concurrent systems in~\cite{DBLP:conf/vmcai/Leitner-FischerL13}. Intuitively, an execution within an LTS is causal whenever it leads to a state where a certain effect, or hazard, is enabled. We handle effects such as the violation of a safety property expressed in HML. Moreover, our definition includes a counterfactual test witnessing that a certain LTS execution $L$ is causal for the occurrence of an effect $E$ if and only if, were $L$ not to happen, $E$ would not occur either.
Additionally, our definition exploits what is referred to as the "non-occurrence of events" in~\cite{DBLP:conf/vmcai/Leitner-FischerL13}, and identifies relevant system execution fragments that, whenever performed, change the occurrence of the effect from true to false. 
Then, similarly to the approaches in~\cite{halpern2005causes,DBLP:conf/vmcai/Leitner-FischerL13}, our definition indicates that a setting that does not include the relevant executions discussed above has no influence on the effect
as long as the causal events are present.
Finally, we require causal executions to be minimal.

We establish the compositionality results with respect to the interleaving of LTS's, thus shifting the fault localization issue to the level of smaller interleaved components.
The current approach only handles non-communicating LTS's.
As an immediate extension of our approach, we would like to extend it to communicating LTS's by adopting ideas from \cite{Aceto12,Gossler15} (please see the conclusions section for more details on this extension).

\paragraph{Related work.}
Lewis-style counterfactual arguments have become the basis for a number of fault analysis, failure localization
and software debugging techniques, such as delta debugging~\cite{Zel09}, nearest neighbor queries~\cite{Renieris},
counterexample explanation in model checking~\cite{Groce03,GroceCKS06} and why-because-analysis~\cite{ladkin1998analysing}.

(De-)compositional verification has been studied in various contexts, such as model-checking~\cite{Andersen95,Giannakopoulou05,Xie05} and model-based conformance testing~\cite{Noroozi13,unknownComponent}.
Our approach is based on our earlier work on decompositional verification of modal mu-calculus formulae~\cite{Aceto12}. 
Regarding compositional verification of causality, we are only aware of the line of work by G{\"{o}}{\ss}ler, Le M{\'{e}}tayer, and associates such as~\cite{Gossler10,Gossler14,Gossler15,Gossler15b}. In the remainder, we review \cite{Gossler10} and \cite{Gossler15} as two closely related examples in this line of work. 

In \cite{Gossler10}, the authors define three trace-theoretic notions of causality for safety properties and provide an assume guarantee framework which allows for decomposing the identification of causes.
They also provide decidability results. Their approach substantially differs from ours: firstly, we combine the different aspects of causality (positive causality, counterfactual, non-occurrence of events, and minimality) in one definition while in \cite{Gossler10} a subset of these aspects is considered in three different definitions. Secondly, the approach of \cite{Gossler10} relies on an assume-guarantee style of specifying the properties, with given LTS models for assume and guarantee contracts, while we rely on the alphabet of the system in decomposing the modal property and its cause. 
Our approach is in its early stages of development and the approach of \cite{Gossler10} has been worked out in various directions. For example, \cite{Gossler10} supports interaction models and is equipped with complexity and decidability results. 

In \cite{Gossler15}, a de-compositional approach to a detecting a trace-based notion of causality is proposed. To start with a failed trace of the system, i.e., a counter-example of the property at hand, is consider and subsequently it is analyzed how the alternative possible behaviors of the different components may lead to failed traces. In our approach, however, we do not start from a system-level counter-example: we aim at decomposing the modal formula for the property, so that all counter-examples are generated locally from the component specifications. Our initial results reported in this paper only concern interleaving components for which a very neat decomposition can be obtained, but our long-term vision is that modal decomposition will enable mechanized decomposition of the modal formula for communicating components, following the approach of \cite{Larsen91,Aceto12}.

A trace-based approach to identifying causality for failures of interleaved systems has been recently introduced in~\cite{DBLP:conf/rv/BefroueiWW14}. In short, the authors propose a method for identifying event sequences that frequently occur within failing system executions, thus possibly revealing causes for system failures. One of the main differences with our approach is that in~\cite{DBLP:conf/rv/BefroueiWW14} system events are parameterised by thread identifiers, program and memory locations, while we consider more abstract events ranging over alphabets denoting (atomic) system actions. Nevertheless, the idea of using thread identifiers might be worth exploited in the context of extending our current work to the setting of concurrent, communicating LTS's.



\paragraph{Paper structure.}{In Section~\ref{sec:prelim} we provide a brief reminder of HML, LTS's, and introduce LTS computations.
In Section~\ref{sec:def-cause} we introduce our notion of causality and provide a series of examples motivating and explaining our definition.
In Section~\ref{sec:dec-cause} we discuss the (de-)compositionality results for causality.
In Section~\ref{sec:conclude} we conclude and provide pointers to further developments.
For a more detailed version of this paper, including complete proofs of the compositionality results, we refer to~\cite{crest-tech-rep}.
}

\section{Preliminaries}\label{sec:prelim}

Let $A$ be a possibly infinite set of labels, usually referred to as \emph{alphabet}. Let $(-)^*$ be the Kleene star operator. We use $w, w_0, w_1, \ldots$ to range over words in $A^*$.
We write $\varepsilon$ for the empty word and $wa$ for the word obtained by concatenating $w \in A^*$ and  $a \in A$. We call a \emph{sub-word} of a word $w$ a word $w'$ obtained by deleting $n$ letters ($n \geq 1$) at some not-necessarily-adjacent positions in $w$, written $w' \in sub(w)$. The empty sequence $\varepsilon$ is a sub-word of $w$.

\begin{definition}[Labeled Transition Systems]
A \emph{labeled transition system} (LTS) is a triple $(\prc,s_0, \act,\rightarrow)$, where $\prc$ is the set of states, $s_0 \in \prc$ is the initial state, $\act$ is the action alphabet and $\rightarrow \subseteq \prc\times\act\times \prc$ is the transition relation.
\end{definition}

We write $\steps{}\subseteq \prc\times\fseq\act\times\prc$, to denote the reachability relation, i.e., the smallest relation satisfying: $\frac{ }{p \steps{\eseq} p},\ \text{and}\ \frac{p \steps{w} p' \quad p' \step a p'' }{p\steps {wa} p''}$.  

The set of actions that can be triggered as a first step from $s \in \prc$ is denoted by $init(s)$: $init(s) = \{a \in A \mid \exists s' \in S~:~s \xrightarrow{a} s'\}$.

\begin{definition}[Computations]\label{def:comp-tr}
Let $[ - ]$ be a list constructor. We write $\decS = [w_0, \ldots, w_n]$ for a finite list of words $w_i \in \act^*$, with $0 \leq i \leq n$. A notation of shape $\decS = [w_0,\, w_1, \ldots ]$ refers to an infinite list $\decS$ of words $w_i \in \act^*$, for $i \geq 0$.
We write $[\,]$ to denote the empty list.
Moreover, we write $w:\decS$ as an alternative to a list with $w$ as the first element, and $\decS$ the "remaining" elements; for instance, $w_1 : [w_2, w_3] = [w_1, w_2, w_3] $.
We say that lists $\decS_0,\ldots ,\decS_n$ are \emph{size-compatible} if they are finite lists of the same length, or if they are all infinite lists. For instance, $[\,]$ and $[\,]$ are size-compatible, $[w_0, w_1, w_2]$ and $[w'_0, w'_1, w'_2]$ are size-compatible, $[w_0, w_1, \ldots]$ and $[w'_0, w'_1, \ldots]$  are size-compatible, whereas $[\,]$ and $[w]$ are not size-compatible.

Consider an LTS $T = (\prc, s_0, \act,\rightarrow)$ and $\pi \in (\prc \times \act \times [ \act^*])^* \times \prc$ a sequence
\[(s_0, l_0, \decS_0), \dots\, (s_n, l_n, \decS_n), s_{n+1}\]
over states $s_i \in \prc$, actions $l_i \in A$ and sets of words $\decS_i \subseteq A^*$, for $0 \leq i \leq n$.
Whenever $\decS_0 , \ldots ,\decS_n$ are size-compatible,
we write $traces((l_0, \decS_0) \ldots (l_n, \decS_n))$ or, in short, $traces(\pi)$, to denote the pairwise extensions of $l_0 \ldots l_n$ with words from $\decS_0, \ldots, \decS_n$ as follows:
\[
\begin{array}{rcl}
traces((l_0, [\,]) \ldots (l_n, [\,])) & = & \{l_0 \ldots l_n\}\\
traces((l_0, w_0:\decS_0) \ldots (l_n, w_n:\decS_n)) & = & \{l_0 w_0 \ldots l_n w_n \} \cup\, 
traces((l_0, \decS_0) \ldots (l_n, \decS_n))
\end{array}
\]
For instance,  $traces((a,[w_{a0}, w_{a1}, w_{a2}]),(b, [\varepsilon, \varepsilon, \varepsilon]), (c, [\varepsilon, w_{c1}, \varepsilon])) = \{a w_{a0} b c, \,a w_{a1} b c w_{c1},\,a w_{a2} b c\}$, for $a, b, c \in A$ and $w_{a0}, w_{a1}, w_{a2}, w_{c1} \in A^*$.

We say that $\pi$ is a \emph{computation} of $T$ whenever the following hold:
\begin{itemize}\itemsep0pt
\item $s_0 \xrightarrow{l_0} s_1 \ldots \xrightarrow{l_n} s_{n+1}$,
\item $\decS_0 , \ldots ,\decS_n$ are size-compatible, and
\item for all $w \in traces(\pi)$ there exists $s \in \prc$ such that $s_0 \steps{w} s$.
\end{itemize}
A computation consisting of only one state $s_0$ is called \emph{trivial computation}.
We use $\pi, \mu, \ldots$ to range over computations.

%

The set of \emph{sub-computations} of $\pi = (s_0, l_0, \decS_0), \ldots, (s_n, l_0, \decS_n), s_{n+1}$, denoted by $sub(\pi)$ is the set of all computations 
$\pi' = (s_0, l'_0, \decS'_0), \ldots, (s_m, l'_m, \decS'_m), s'_{m+1}$ 
such that $l'_0 \ldots l'_m \in sub(l_0 \ldots l_n)$.
Note that all elements of $sub(\pi)$ should be computations themselves. 
\end{definition}

For an intuition, size-compatible lists $\decS_0,\ldots, \decS_n$ encode the pairwise extensions of execution traces $l_0 \ldots l_n$ in $T$ that always disable a certain effect.
Given a computation $(s_0, l_0, \decS_0), \ldots, (s_n, l_n, \decS_n), s_{n+1}$ as above,
sequences $w = l_0 w_0 \ldots l_n w_n \in traces((l_0, \decS_0) \ldots (l_n, \decS_n))$ determine executions
$s_0 \steps{w} s$ in $T$, such that the effect does not occur in $s$.
In our framework, occurrence of effects is formalised in terms of satisfiability of formulae in Hennessy Milner logic~\cite{DBLP:conf/icalp/HennessyM80}.

\begin{definition}[Hennessy-Milner logic]\label{def:HML}

The syntax of Hennessy-Milner logic (HML)~\cite{DBLP:conf/icalp/HennessyM80} is given by the following grammar:
\[\phi, \psi:: = \top \ \mid \ \dia {a} \phi \ \mid \ \bbox {a} \phi \ \mid\ \neg \phi \ \mid \ \phi \land \psi \ \mid \ \phi \lor \psi \ \qquad (a\in\act).\]

We define the satisfaction relation $\vDash$ over LTS's and HML formulae as follows. 
The alphabet of a formula $\phi$, denoted by $alphabet(\phi)$ is the set of actions that appear in $\phi$.

Let $T = (\prc,s_0, \act,\rightarrow)$ be an LTS. Let $\phi,\,\phi'$ range over HML formulae. It holds that:\\[0.5ex]
$
\begin{array}{l}
s  \vDash  \top \textnormal{~~for all $s \in \prc$}\\
s \vDash \neg \phi \textnormal{~~whenever $s$ does not satisfy $\phi$; also written as $s \not \vDash \phi$}\\
s \vDash \phi \land \phi' \textnormal{~~if and only if $s \vDash \phi$ and $s \vDash \phi'$}\\
s \vDash \phi \lor \phi' \textnormal{~~if and only if $s \vDash \phi$ or $s \vDash \phi'$}\\
s \vDash \dia {a} \phi \textnormal{~~if and only if $s \xrightarrow{a} s'$ for some $s' \in \prc'$ such that $s' \vDash \phi$} \\
s \vDash \bbox {a} \phi \textnormal{~~if and only if $s' \vDash \phi$ for all $s' \in \prc'$ such that $s \xrightarrow{a} s'$}.
\end{array}
$
\end{definition}


\section{Defining Causality}\label{sec:def-cause}

We further provide a notion of causality for LTS's. 
The effects that we consider are safety properties 
expressed as HML formulae. Examples motivating and explaining each of the items of our definition are given towards the end of this section.

Our notion of causality complies with that of "actual causation" proposed in~\cite{halpern2005causes} and further adapted to the setting of concurrent systems in~\cite{DBLP:conf/vmcai/Leitner-FischerL13}: 
\begin{itemize}
\item
Intuitively, AC1 in Definition~\ref{def:causality2} states that there must be a setting, or an execution within the LTS under consideration, that determines an effect, or a hazardous situation in which a safety property is violated.
\item
AC2(a) identifies a setting in which the effect does not occur. This is the counter-factual part of our definition.
\item
AC2(b) indicates that, as long as the causal events are present, a setting that does not include the relevant executions discussed above has no influence on the effect.
\item
AC2(c) corresponds to the so-called "non-occurrence of events" in~\cite{DBLP:conf/vmcai/Leitner-FischerL13}, and identifies relevant system execution fragments that, whenever performed, change the occurrence of the effect from true to false. Intuitively, the aforementioned execution fragments are causal by their absence: the effect is enabled only within settings in which the fragments are not executed by our LTS.
\item
AC3 corresponds to the minimality condition in both~\cite{halpern2005causes} and~\cite{DBLP:conf/vmcai/Leitner-FischerL13}.
\end{itemize}

The approach in~\cite{DBLP:conf/vmcai/Leitner-FischerL13} also exploits an ordering condition (OC) that identifies whether the order in which certain events are executed is causal with respect to a given effect, or not. Our framework does not explicitly handle such orderings.
Nevertheless, for non-interleaved systems, such orderings are implicitly captured by sequences $l_0 \ldots l_n$ determined by causal computation as in Definition~\ref{def:causality2}. 
Additionally, as also discussed in Remark~\ref{rm:OC}, the compositionality results in Section~\ref{sec:dec-cause} can alleviate the ordering issue for certain kinds of effects in the context of interleaved systems.

\begin{definition}[Causality for LTS's\label{def:causeLTS}]\label{def:causality2}
Consider a transition system $T$ $=$ $(\prc, s_0, \act,\rightarrow)$; 
causal traces for an HML property $\phi$ in $T$
denoted by $\mathit{Causes}(\phi, T)$ 
is the set of all computations 
$\pi $ $=$ $(s_0, l_0, \decS_0),$ $\ldots,$ $(s_n, l_n, \decS_n), s_{n+1}$  $\in$ $(S \times \act \times [\act^*])^* \times S$ such that 
\begin{enumerate}
\item $s_0 \step{l_0} \ldots s_n \step{l_n} s_{n+1}$ $\land$ $s_{n+1} \vDash \phi$ \textbf{(Positive causality, AC1)}, 

\item $\exists { \chi \in \act^*, 
s' \in \prc} : s_0 \steps{\chi} s' \land s' \vDash \neg \phi$ 
\textbf{(Counter-factual,  AC2(a))},
 



\item[3.]  $\forall \chi' = l_0 \chi_0  \ldots l_n \chi_n \in \{l_0 \ldots l_n\} \cup (A^* \setminus traces((l_0, \decS_0) \ldots (l_n, \decS_n))),\, {s' \in \prc}\,:\,s_{{0}} \steps{\chi'} s' \Implies s' \vDash \phi$ \\
\textbf{(Causality of occurrence,  AC2(b))}


\item[4.]  $\forall \chi' \in traces((l_0, \decS_0) \ldots (l_n, \decS_n)) \setminus \{l_0 \ldots l_n\}, \,{s' \in \prc}  : s_{{0}} \steps{\chi'} s' \Implies s' \vDash \neg \phi$\\
\textbf{(Causality of non-occurrence, AC2(c))}


\item[5.] $\forall {\pi' \in sub(\pi)} : \pi' $ does not satisfy items 1.~--~4. above \textbf{(Minimality, AC3)}
 
\end{enumerate}
\end{definition}

\begin{definition}[Causal projection]\label{def:causal-proj}
A causal projection of $T$ $=$ $(\prc, s_0, \act,\rightarrow)$ with respect to an HML property $\phi$, is $T' = (\prc', s_0, A,\rightarrow')$ such that 
$\prc' = \{ s_i \mid 0 \leq i \leq n+1 \land (s_0, l_0, \decS_0), 
\ldots, (s_n, l_n, \decS_n), s_{n+1} \in$ $\mathit{Causes}(\phi, T)\}$ and 
$\rightarrow' = \{(s_i, l_i, s_{i+1}) \mid 
0 \leq i \leq n \land (s_0, l_0, \decS_0), 
\ldots, (s_n, l_n, \decS_n), s_{n+1} \in$ $\mathit{Causes}(\phi, T)\}$.

We write $T \downarrow \phi$ to denote the causal projection of $T$ with respect to $\phi$.  
\end{definition}

Intuitively, a causal projection is an LTS whose executions capture precisely all causal sequences determined by computations as in Definition~\ref{def:causality2}.


Next, we illustrate the different aspects of Definition \ref{def:causeLTS} using the following small ``canonical'' examples. 
The first example below motivates the positive causality condition (item 1 in Definition \ref{def:causeLTS}).

\begin{figure}
\begin{center}
\scalebox{.8}{
\begin{tabular}{p{4.5cm}p{4.5cm}p{4.5cm}p{4.5cm}}
\includegraphics[bb=0 0 20 170]{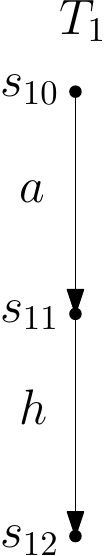}&
\includegraphics[bb=0 0 10 170]{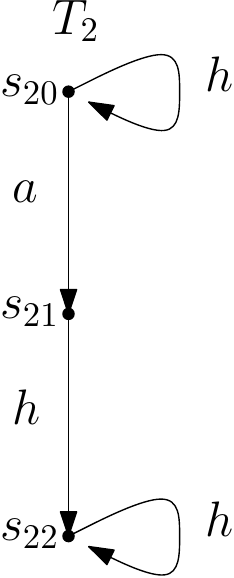}&
\includegraphics[bb=0 0 30 170]{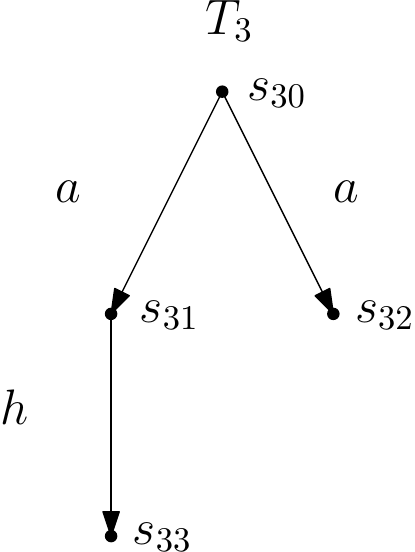}&
\includegraphics[bb=0 0 0 170]{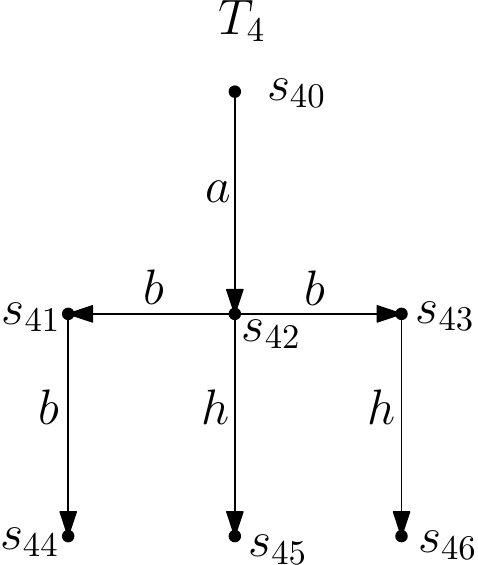} \\
(a) Action $a$ causes hazard $h$. & 
(b) The occurrence of hazard $h$ is factual (trivial). & 
(c) The occurrence of $a$ is not causal for hazard $h$. & 
(d) The non-occurrence of $bb$ is causal for hazard $h$. 
\end{tabular} 
}
\end{center}
\caption{Canonical examples motivating different conditions on causality\label{fig::ah}} 
\end{figure}

\begin{example}[Positive causality]\label{ex:pos-caus}
Consider the formula $\phi = \dia {h} \top$, which states that action $h$ (for hazard) is enabled at the current state and LTS $T_1$ depicted in Figure \ref{fig::ah}.(a). 

The intuition behind the notion of cause suggests that action $a$ should be considered a cause for  $\dia {h} \top$. According to Definition \ref{def:causeLTS}, we have that $(s_{10}, a, [h]), s_{11} \in \mathit{Causes}(\phi, T)$.  The causal projection of $T_1$ for $\phi$ 
is has one transition, namely, $s_{10} \step{a} s_{11}$.  

%

\end{example}

The following example motivates the non-triviality condition (item 2 in Definition \ref{def:causeLTS}).

\begin{example}[Counter-factual]
Consider the LTS $T_2$ depicted in Figure \ref{fig::ah}.(b) and the same formula $\phi = \dia{h} \top$.
Although trace $a$ can lead to a state where $\phi$ holds, the hazard formula holds trivially everywhere else, and hence there is no cause to be identified; we refer to Lemma~\ref{lm:dec-disj-trivial} for a formalisation. 
\end{example}

The next two examples motivate the causality of occurrence and non-occurence, respectively (items 3 and 4 in Definition \ref{def:causeLTS}).

\begin{example}[Causality of Occurrence\label{ex::causalityOccurrence}]
Consider the LTS $T_3$ depicted in Figure \ref{fig::ah}.(c) and the same formula $\phi = \dia{h} \top$.
Trace $a$ can non-deterministically lead to two states, namely $s_{31}$ and $s_{32}$. 
The formula holds only in one of them, 
namely in $s_{31}$. Hence, $a$ cannot be considered a cause for the hazard. 
More precisely, if a trace is causal then its execution, or ``occurrence", always leads to a state where the hazard holds. 

\end{example}
 

%

\begin{example}[Causality of Occurrence and Non-occurrence\label{ex:causalityOccurNonOcur}]

Consider the LTS $T_4$ depicted in Figure \ref{fig::ah}.(d) and the same formula $\phi = \dia{h} \top$.
Trace $a$ leads to state $s_{42}$ where the hazard formula holds.
Trace $ab$ also leads to a hazardous state $s_{43}$; however, performing another $b$, i.e., performing the trace $abb$ from the initial state, removes the hazard. 
Hence, $(s_{40}, a, [\varepsilon]), s_{42}$ is not in the set of causes for $\phi$, because extending $a$ with $bb$, for instance, violates $\phi$ and thereby violating item 3 in Definition 
 \ref{def:causeLTS}.
However,  $(s_{40}, a, [h, bb, bh]), s_{42}$ is a cause, because $a$  leads to a hazard, all possible extensions of $a$ with anything but $h$, $bb$ or $bh$, the only ones being $\varepsilon$ and $b$, also keep the hazard. On the other hand, the extensions of $a$ with $h$, $bb$ or $bh$ remove the hazard. Hence, $h$, $bb$ and $bh$ are the "relevant extension" that enable removing the hazard.
\end{example}

The next example motivates the minimality condition, item 5 in Definition \ref{def:causeLTS}. 
  
\begin{example}[Minimality Condition]\label{ex:min-cond}
Consider again the LTS $T_4$ treated in Example \ref{ex:causalityOccurNonOcur}. 
Computation $(s_{40}, a, [\varepsilon,\varepsilon]),$ $(s_{42}, b, [h,b]), s_{43}$  is not a cause because it is not minimal (violating item 5 in Definition 
\ref{def:causeLTS}). This is because its sub-computation $(s_{40}, a, [h,bb,bh]),$ $s_{42}$ is a cause as illustrated in Example~\ref{ex:causalityOccurNonOcur}. 

Consider the LTS $T_5$ depicted in Figure \ref{fig::ah2}.(a) and the formula $\phi = \dia{h} \top$.
For instance, the computation $(s_{50}, a, [\varepsilon, \varepsilon, \varepsilon \ldots]),$ $(s_{51}, i, [h, ih, iih\ldots]),$ $s_{51}$ is not in $\mathit{Causes}(\phi, T_5)$, because performing an $i$ does not change the state of the system and hence, cannot contribute to the occurrence of the hazard. Computation $(s_{50}, a, [h, ih, iih, \ldots]),$ $s_{51}$,  however, is in $\mathit{Causes}(\phi, T_5)$, because it satisfies all the conditions of the cause, including minimality.

%

Consider the LTS $T_6$ depicted in Figure \ref{fig::ah2}.(b) and the formula $\phi = \dia{h} \top$.
Computation $(s_{60}, a, [\varepsilon]),$ $(s_{63}, b, [h]),$ $s_{65}$ 
is a cause for $\phi$, despite the fact that computation $(s_{60}, a, [h,bh]), s_{61}$ also leads to the hazard.

This is not a violation of minimality, because $(s_{60}, a, [h,bh]), s_{61}$  does not satisfy the so-called ''Causality of occurrence" (AC2(b)) in Definition~\ref{def:causality2}, as also illustrated in Example~\ref{ex::causalityOccurrence}. 
\end{example}

\begin{figure}
\begin{center}
\scalebox{.8}{
\begin{tabular}{p{4.5cm}p{4.5cm}}
\includegraphics[bb=0 0 0 150]{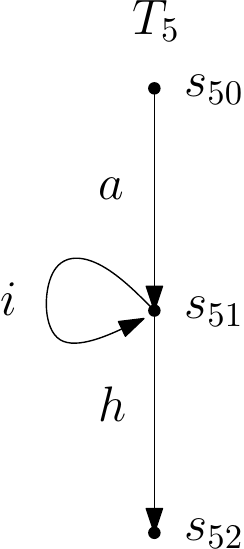} & 
\includegraphics[bb=0 70 0 150]{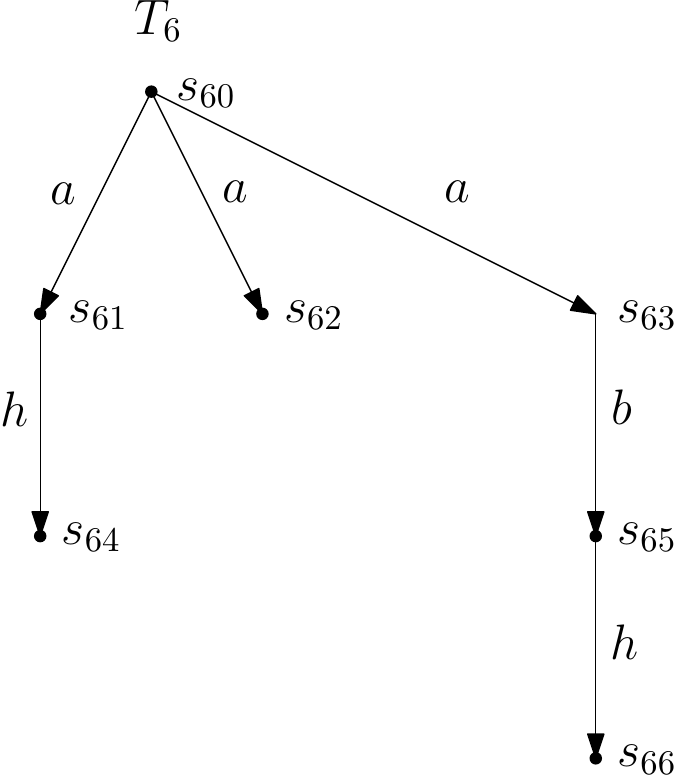} 
  \\
(a) Action $i$ does not contribute to $h$. & 
(a) Trace $ab$ is a cause because trace $a$ is not a cause.
\end{tabular} 
}
\end{center}
\caption{Canonical examples motivating minimality condition\label{fig::ah2}} 
\end{figure}

\section{(De-)composing Causality} \label{sec:dec-cause}

In this section we provide the main results regarding (de-)compositionality of causality.
Theorem~\ref{thm:decomposing-disjunction-iso} states the equivalence between reasoning on causality with respect to disjunctions $\phi \lor \psi$ of HML formulae in the context of interleaved LTS's, and reasoning on causality with respect to $\phi$ or $\psi$ in the corresponding interleaved components.
Orthogonally, Theorem~\ref{thm:decomposing-conjunction-iso} captures the equivalence between reasoning on causality with respect to conjunctions $\phi \land \psi$ of HML formulae in the context of interleaved LTS's, and reasoning on causality with respect to $\phi$ and $\psi$ in the corresponding interleaved components.
Both results are established for non-communicating LTS's executing disjoint sets of actions.

Our formal framework exploits standard notions of interleaving ($\Par$) and non-deterministic ($+$) choice between LTS's~\cite{DBLP:books/sp/Milner80} or, more explicitly, between causal projections as in Definition~\ref{def:causal-proj}.
Consider the LTS $T = (\prc,s_0,\act,\rightarrow)$, 
$a \in A$ and $s, s', p, p' \in \prc$. Then:
\[
\begin{array}{ll}
s \Par p \xrightarrow{a} s' \Par p \textnormal{~~whenever~~} s \xrightarrow{a} s' & \hspace{20pt}
s + p \xrightarrow{a} s' \textnormal{~~whenever~~} s \xrightarrow{a} s'\\
s \Par p \xrightarrow{a} s \Par p' \textnormal{~~whenever~~} p \xrightarrow{a} p' & \hspace{20pt}
s + p \xrightarrow{a} p' \textnormal{~~whenever~~} p \xrightarrow{a} p'.
\end{array}
\]

Consider LTS's $T = (\prc,s_0,\act,\rightarrow)$ and $T' = (\prc',s_0',B,\rightarrow')$.
We abuse the notation and write $T || T'$ in lieu of $s_0 \Par s'_0$, and
$T + T'$ in lieu of $s_0 + s'_0$.

With this intuition in mind, we proceed to discussing our compositionality results. 

Lemma~\ref{lm:dec-disj-trivial} provides a result that shows that reasoning on (de-)composition of causality in the context of formulae that hold in the initial state of a system is trivial.

\begin{lemma}[Immediate Causality]\label{lm:dec-disj-trivial}
Consider the LTS's $T = (\prc,s_0,\act,\rightarrow)$ and the HML property $\phi$.
If $s_0 \vDash \phi$ it holds that  
$
s_0  = \mathit{Causes}(\phi, T) 
$
or $\mathit{Causes}(\phi) = \emptyset$.
\end{lemma}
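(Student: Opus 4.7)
The plan is to proceed by a case analysis on the truth value of the counter-factual condition AC2(a) in Definition~\ref{def:causality2}, which is the only condition depending on $T$, $\phi$, and the initial state $s_0$ alone and not on the particular computation under consideration. If no state reachable from $s_0$ satisfies $\neg\phi$, then AC2(a) fails for every computation, and consequently $\mathit{Causes}(\phi, T) = \emptyset$, yielding the second disjunct of the conclusion. The interesting case is when some reachable state does violate $\phi$; here I would argue that the trivial computation $\pi_0 = s_0$ is the unique causal witness, yielding the first disjunct.

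To show that $\pi_0$ itself is a cause under the hypothesis of this case, I would verify the five items of Definition~\ref{def:causality2} with ``empty'' $n$. AC1 is immediate from the hypothesis $s_0 \vDash \phi$, and AC2(a) is precisely the case assumption. For AC2(b) and AC2(c), note that for the trivial computation the underlying action sequence $l_0 \ldots l_n$ is the empty word $\varepsilon$ and $traces((l_0, \decS_0) \ldots (l_n, \decS_n)) = \{\varepsilon\}$; hence the universal quantifiers either degenerate to checking $s_0 \steps{\varepsilon} s_0 \Implies s_0 \vDash \phi$ (which holds by hypothesis) or range over the empty set and are vacuously satisfied. Finally, AC3 holds because the trivial computation admits no strictly smaller sub-computation in the sense of Definition~\ref{def:comp-tr}.

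The remaining step is to rule out non-trivial causes by appealing to minimality. The key observation is that $\pi_0 = s_0$ is itself a sub-computation of every computation $\pi = (s_0, l_0, \decS_0), \ldots, (s_n, l_n, \decS_n), s_{n+1}$ of $T$, since $\varepsilon \in sub(l_0 \ldots l_n)$ and $s_0$ is a well-formed trivial computation of $T$. Because we have already established that $\pi_0$ satisfies items 1--4 under the assumption of the case, any non-trivial candidate $\pi$ would then violate AC3 and therefore cannot belong to $\mathit{Causes}(\phi, T)$. Hence $\mathit{Causes}(\phi, T) = \{s_0\}$, matching the first disjunct. The main obstacle I anticipate is pinning down the precise reading of items 3 and 4 in the boundary case where the index set $\{0,\ldots,n\}$ is empty; the argument above relies on reading the quantifiers vacuously there, which is the interpretation consistent with the informal claim preceding the lemma that causality is trivial whenever $\phi$ already holds in $s_0$.
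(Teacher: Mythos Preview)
Your argument is correct and is the natural one: the paper does not actually include a proof of this lemma (it is stated without proof, with full proofs deferred to the technical report), but the case split on AC2(a) followed by the minimality argument via the trivial sub-computation $s_0$ is exactly the intended reasoning. Your explicit flag about the boundary reading of items~3 and~4 for the empty index set is appropriate; the definitions in the paper do leave that case implicit, and the vacuous reading you adopt is the one consistent with how the lemma is used downstream (e.g., in ruling out immediate effects before Theorems~\ref{thm:decomposing-disjunction-iso} and~\ref{thm:decomposing-conjunction-iso}).
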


We call properties $\phi$ as above \emph{immediate effects}.

\subsection{(De-)Composing Disjunction}\label{sec:dec-disj}
In what follows we show that reasoning on causality with respect to disjunctions of HML formulae $\phi \lor \psi$ can be performed in a compositional fashion.

Intuitively, the result in Lemma~\ref{lm:decomposing-disjunction-only-if} states that causality is preserved under disjunction of HML formulae and the interleaving of non-communicating LTS's.
Or, more precisely, given two non-communicating LTS's $T$ and $T'$ and two HML formulae $\phi$ and $\psi$ built over their corresponding alphabets, it holds that a cause $\pi \in \mathit{Causes}(\phi, T)$ determines a cause $\mu \in \mathit{Causes}(\phi \lor \psi, T \Par T')$ within the interleaved LTS's.

\begin{lemma}[]\label{lm:decomposing-disjunction-only-if}
Consider LTS's $T = (\prc,s_0,\act,\rightarrow)$ and $T' = (\prc',s_0',B,\rightarrow')$ such that $A \cap B = \emptyset$.
Assume two HML formulae $\phi$ and $\psi$ over $A$ and $B$, respectively.
Whenever $\phi$ and $\psi$ are not immediate effects, the following holds:
\begin{enumerate}
\item[ ] If $\pi = (s_0 , l_0, \decS_0), \ldots,  (s_n , l_n, \decS_n), s_{n+1} \in \mathit{Causes}(\phi, T )$, then there exists\\
$\mu = (s_0 \Par s'_0, l_0, \overline{\decS}_0), \ldots,  (s_n \Par s'_0, l_n, \overline{\decS}_n), s_{n+1} \Par s'_0 \in \mathit{Causes}(\phi \lor \psi, T \Par T')$.
\end{enumerate}
\end{lemma}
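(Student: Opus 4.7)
The plan is to lift $\pi$ to a computation $\mu$ of $T \Par T'$ that keeps $T'$ idle along the spine $s_0 \Par s'_0 \step{l_0} s_1 \Par s'_0 \cdots \step{l_n} s_{n+1} \Par s'_0$, and to enlarge each disabling list $\decS_i$ to a list $\overline{\decS}_i$ over $A \cup B$ that absorbs all $B$-interleaved hazard-disabling detours of the composed system. Writing $u|_A$ and $u|_B$ for the projections of $u \in (A \cup B)^*$ to $A^*$ and $B^*$, the non-communication of $T, T'$ together with $A \cap B = \emptyset$ yields a projection lemma: $s_0 \Par s'_0 \steps{u} s \Par p$ iff $s_0 \steps{u|_A} s$ in $T$ and $s'_0 \steps{u|_B} p$ in $T'$. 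Since $\phi$ and $\psi$ are HML formulae over disjoint alphabets, this lifts to a satisfaction splitting: $s \Par p \vDash \phi \lor \psi$ iff $s \vDash \phi$ or $p \vDash \psi$. Concretely, I index the columns of $(\overline{\decS}_0, \ldots, \overline{\decS}_n)$ by pairs $(j, (\beta_0, \ldots, \beta_n))$ where $j$ is a column of $(\decS_0, \ldots, \decS_n)$ with $l_0 w^j_0 \cdots l_n w^j_n \in traces(\pi) \setminus \{l_0 \ldots l_n\}$ and $(\beta_0, \ldots, \beta_n) \in (B^*)^{n+1}$ satisfies $s'_0 \steps{\beta_0 \cdots \beta_n} p$ with $p \not\vDash \psi$; in slot $i$ I place every interleaving of $w^j_i$ with $\beta_i$. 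The choice $\beta_i = \varepsilon$ throughout is always admissible (because $\psi$ is not an immediate effect), so the original columns of $\decS_i$ sit inside $\overline{\decS}_i$ as a sub-collection.

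With this construction, AC1, AC2(a), and AC2(c) for $\mu$ are quick. AC1 lifts $s_{n+1} \vDash \phi$ through the splitting. AC2(a) reuses the witness $\chi \in A^*$ from AC2(a) of $\pi$: then $s_0 \Par s'_0 \steps{\chi} s' \Par s'_0$ falsifies both disjuncts, using $s'_0 \not\vDash \psi$ for the second. AC2(c) holds because every $\chi' \in traces(\mu) \setminus \{l_0 \ldots l_n\}$ has, by construction, $\chi'|_A \in traces(\pi) \setminus \{l_0 \ldots l_n\}$ and $s'_0 \steps{\chi'|_B} p \not\vDash \psi$, so AC2(c) of $\pi$ gives $s \not\vDash \phi$ and the splitting gives $s \Par p \not\vDash \phi \lor \psi$.

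The main obstacle is AC2(b). For an executable $\chi' = l_0 \chi_0 \ldots l_n \chi_n$ over $A \cup B$ with $\chi' = l_0 \ldots l_n$ or $\chi' \notin traces(\mu)$, I case-split on $\chi'|_A$. The trivial case $\chi'|_A = l_0 \ldots l_n$ and the case $\chi'|_A \notin traces(\pi)$ are dispatched by AC1 and AC2(b) of $\pi$ respectively. The delicate case is $\chi'|_A \in traces(\pi) \setminus \{l_0 \ldots l_n\}$: here $\chi' \notin traces(\mu)$ forces the index-pair (column of $\chi'|_A$, $B$-cuts of $\chi'$) to be absent, yet $\chi'$ is executable, so $s'_0 \steps{\chi'|_B} p$ for some $p$; since the construction retained every pair whose $B$-budget reaches a $\neg\psi$-state, the absence forces $p \vDash \psi$, whence $s \Par p \vDash \phi \lor \psi$. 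This is the step that forced the mixed words into $\overline{\decS}_i$; the naive choice $\overline{\decS}_i = \decS_i$ would leave exactly this case unresolved. Finally, AC3 is handled by contraposition: any $\mu' \in sub(\mu)$ satisfying items 1--4 projects to a $\pi' \in sub(\pi)$ with the same label sub-word and with $\decS'_i$ obtained by applying $|_A$ word-wise to $\overline{\decS}'_i$. Because all $l_i \in A$, executability in $T$ is automatic, and the projection lemma together with $s'_0 \not\vDash \psi$ transfers the four conditions for $\phi \lor \psi$ in $T \Par T'$ back to the four conditions for $\phi$ in $T$, contradicting minimality of $\pi$.
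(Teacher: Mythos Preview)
Your overall strategy coincides with the paper's: lift $\pi$ along the idle-$T'$ spine $s_0\Par s'_0 \step{l_0}\cdots\step{l_n} s_{n+1}\Par s'_0$, enlarge each $\decS_i$ with suitable $B$-material so that AC2(b)/(c) transfer via the projection and satisfaction-splitting lemmas, and obtain AC3 by projecting a hypothetical shorter cause back to $T$. The paper organises the enlargement in three steps (copying $\decS_i$; interleaving with the non-occurrence traces coming from elements of $\mathit{Causes}(\psi,T')$; a clean-up pass for AC2(b)), whereas you index columns directly by $B$-cuts $(\beta_0,\ldots,\beta_n)$ landing in $\neg\psi$-states, but the architecture is the same.

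There is, however, a genuine gap in your handling of the pair AC2(b)/AC2(c), and it is a quantifier slip. Your column condition ``$s'_0 \steps{\beta_0\cdots\beta_n} p$ with $p \not\vDash \psi$'' is used \emph{existentially} in the AC2(b) case~3 argument (absence of the column is to force every reachable $p$ to satisfy $\psi$, which needs the column to be present whenever \emph{some} reachable $p$ refutes $\psi$) but \emph{universally} in the AC2(c) argument (you need \emph{every} reachable $p$ to refute $\psi$). These readings are incompatible once $T'$ is nondeterministic. Concretely, take $T$ with $s_0\step{a}s_1\step{h}s_2$ and $\phi=\dia{h}\top$, so $(s_0,a,[h]),s_1\in\mathit{Causes}(\phi,T)$; take $T'$ with $s'_0\step{b}s'_1$, $s'_0\step{b}s'_2$, $s'_1\step{h'}s'_1$ and $\psi=\dia{h'}\top$. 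Then the word $ahb$ reaches both $s_2\Par s'_1\vDash\psi$ and $s_2\Par s'_2\not\vDash\phi\lor\psi$: under the existential reading $ahb\in traces(\mu)$ and AC2(c) fails at $s_2\Par s'_1$; under the universal reading $ahb\notin traces(\mu)$ and AC2(b) fails at $s_2\Par s'_2$. Your construction as written does not resolve this tension. The paper's sketch side-steps it by drawing the $B$-material from $\mathit{Causes}(\psi,T')$ (whose AC2(c) traces are guaranteed to reach \emph{only} $\neg\psi$-states) together with a separate clean-up step; to repair your argument you must either make this dependence on $\mathit{Causes}(\psi,T')$ explicit, or add a hypothesis ruling out $B$-words that nondeterministically reach both $\psi$- and $\neg\psi$-states.
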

\begin{proof}[Proof Sketch.]
The statement follows by two intermediate results.

We show how to create a computation $\mu$ satisfying conditions AC1--AC2(c) in Definition~\ref{def:causality2} from $\pi$, given the hypothesis that $\pi$ satisfies conditions AC1--AC2(c) as well.
AC1 is satisfied for $\mu$ as a consequence of AC1 being satisfied for $\pi$. AC2(a) trivially holds for $\mu$ as $\phi$ and $\psi$ are not immediate effects. Showing AC2(b) and AC2(c) strongly relies on the shape of $\overline{\decS}_0, \ldots, \overline{\decS}_n$. The lists $\overline{\decS}_i$ are created in three steps.
\begin{enumerate}
\item We begin by simply "copying" the information in each $\decS_i$ into the corresponding $\overline{\decS}_i$.
\item We identify all causal traces $\chi$ obtained by interleaving the causal traces of $\pi$ with the causal traces determined by all computations in $\mathit{Causes}(\psi, T')$. We make the necessary insertions into the lists $\overline{\decS}_i$, so that $\chi$'s are stored as causal traces of computations in $\mathit{Causes}(\phi \lor \psi, T \Par T')$.
\item We compute all the causal traces $\chi$ for $\phi \lor \psi$ that do not allow $s'_0$ to evolve in $T'$, but consist of words in $B$ as well. We make the necessary insertions into the lists $\overline{\decS}_i$, so that $\chi$'s are stored as causal traces of computations in $\mathit{Causes}(\phi \lor \psi, T \Par T')$. This step guarantees that the remaining traces in $(A \cup B)^* \setminus traces((l_0, \overline{\decS}_0)\ldots((l_n, \overline{\decS}_n)))$ are not "harmful" with respect to AC2(b) for $\mu$, as they never lead to $s \Par s' \vDash \neg \phi \land \neg \psi$.
\end{enumerate}
By the above construction, AC2(b) and AC2(c) hold for $\mu$ as well.

AC3 for $\mu$ is proved to hold by reductio ad absurdum. In short, we show that whenever there is $\mu' \in sub(\mu)$, such that $\mu'$ satisfies AC1--AC2(b), there exists $\pi' \in sub(\pi)$, such that $\pi'$ satisfies AC1--AC2(b) as well. This contradicts the hypothesis $\pi \in \mathit{Causes}(\phi, T )$. 

\end{proof}

Intuitively, Lemma~\ref{lm:decomposing-disjunction-if} states that causality with respect to an effect $\phi \lor \psi$ in two interleaved, but non-communicating LTS's, is preserved by at least one of the interleaved components.
Or, more precisely, given two non-communicating LTS's $T$ and $T'$ and two HML formulae $\phi$ and $\psi$ built over their corresponding alphabets, it holds that a cause
$\mu \in \mathit{Causes}(\phi \lor \psi, T \Par T')$ within the interleaved LTS's
determines 
a cause $\pi \in \mathit{Causes}(\phi, T)$ for $\phi$ in $T$, or a cause $\pi'  \in \mathit{Causes}(\psi, T')$ for $\psi$ in $T'$.

\begin{lemma}[]\label{lm:decomposing-disjunction-if}
Consider LTS's $T = (\prc,s_0,\act,\rightarrow)$ and $T' = (\prc',s_0',B,\rightarrow')$ such that $A \cap B = \emptyset$. Assume two HML formulae $\phi$ and $\psi$ over $A$ and $B$, respectively.
Whenever $\phi$ and $\psi$ are not immediate effects, the following holds:
\begin{enumerate}
\item[ ] If $\mu = (s_0 \Par s'_0, l_0, \decS_0), \ldots,  (s_n \Par s'_n, l_n, \decS_n), s_{n+1} \Par s'_{n+1} \in \mathit{Causes}(\phi \lor \psi, T \Par T')$, then there exists\\
$\pi = (s_k , l_k, \overline{\decS_k}), \ldots,  (s_m , l_m, \overline{\decS_m}), s_{n+1} \in \mathit{Causes}(\phi, T )$ or\\
$\pi' = (s'_p , l'_p, \overline{\decS'_p}), \ldots,  (s'_q , l'_q, \overline{\decS'_q}), s'_{n+1} \in \mathit{Causes}(\psi, T' )$.
\end{enumerate}
For all $k \leq i \leq m$: $(s_i, l_i, \overline{\decS}_i)$ corresponds to $(s_i \Par s'_i, l_i, {\decS}_i)$ in $\mu$, whenever $l_i \in A$.
For all $p \leq j \leq q$: $(s'_j, l'_j, \overline{\decS'}_j)$ corresponds to $(s_j \Par s'_j, l'_j, {\decS'}_j)$ in $\mu$, whenever $l'_j \in B$.
Moreover, $l_k \ldots l_m = l_0 \ldots l_n \downarrow A$, $l'_p \ldots l'_q = l_0 \ldots l_n \downarrow B$.
\end{lemma}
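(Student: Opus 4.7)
The plan is to dualise the construction used in Lemma~\ref{lm:decomposing-disjunction-only-if}: from the cause $\mu$ in the interleaved system, I extract by projection onto one of the two disjoint alphabets a cause in the corresponding component. Since $\mu$ satisfies AC1, the reached state $s_{n+1} \Par s'_{n+1}$ satisfies $\phi \lor \psi$; because $alphabet(\phi) \subseteq A$, $alphabet(\psi) \subseteq B$ and $A \cap B = \emptyset$, satisfaction of $\phi$ at $s_{n+1} \Par s'_{n+1}$ depends only on the $T$-component, and similarly for $\psi$. Hence either $s_{n+1} \vDash \phi$ in $T$ or $s'_{n+1} \vDash \psi$ in $T'$. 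Without loss of generality I treat the first case; the second case yields $\pi'$ by the symmetric argument over $T'$ and $B$.

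Concretely, I would obtain $\pi$ by keeping exactly those tuples $(s_i \Par s'_i, l_i, \decS_i)$ of $\mu$ for which $l_i \in A$, reindexing them as $(s_k, l_k, \overline{\decS}_k), \ldots, (s_m, l_m, \overline{\decS}_m)$, and closing with the final state $s_{n+1}$. Each list $\overline{\decS}_j$ is built from the corresponding $\decS_i$ of $\mu$ by first absorbing the intermediate $B$-labelled tuples that lie between $l_i$ and the next $A$-action in $\mu$, and then projecting every resulting extension word onto $A$; disjointness of $A$ and $B$ ensures that every word in $\overline{\decS}_j$ is executable from $s_0$ in $T$ when concatenated with the preserved action prefix.

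I would then verify AC1--AC3 for $\pi$. AC1 is immediate: the $A$-restriction of the run of $\mu$ is a valid run of $T$ and $s_{n+1} \vDash \phi$ by the WLOG choice. AC2(a) is obtained by projecting onto $A$ the counter-factual witness $\chi$ for $\mu$; the reached pair falsifies $\phi \lor \psi$ and hence $\phi$, so its $T$-component falsifies $\phi$. AC2(b) and AC2(c) are the crux: I would show that an $A$-trace from $s_0$ in $T$ falsifies $\phi$ exactly when it arises as the $A$-projection of some interleaved trace of $\mu$ that falsifies $\phi \lor \psi$, and that by construction the extensions $\overline{\decS}_j$ precisely collect those projected words. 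This step crucially uses that $\phi$ is oblivious to $B$-actions, so inserting or deleting $B$-letters anywhere along an interleaved trace does not change the satisfaction of $\phi$ at its endpoint. For AC3 I would argue by contradiction using Lemma~\ref{lm:decomposing-disjunction-only-if}: any strict sub-computation $\pi^{\star} \in sub(\pi)$ meeting AC1--AC2(c) for $\phi$ in $T$ would lift, via that lemma, to a sub-computation $\mu^{\star} \in sub(\mu)$ meeting AC1--AC2(c) for $\phi \lor \psi$ in $T \Par T'$, contradicting the minimality of $\mu$.

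The main obstacle will be AC2(b) and AC2(c), i.e., proving that the purported extensions $\overline{\decS}_j$ correspond exactly to the $A$-projections of the relevant extensions of $\mu$. The bookkeeping is delicate because a single extension word in $\decS_i$ may freely interleave $A$- and $B$-letters, and because $B$-prefixes originating from $s'_0$ must be factored out without discarding any $A$-trace that genuinely removes $\phi$ in $T$. Once this correspondence is established, the remaining AC-clauses transfer smoothly along the projection, and AC3 is obtained for free from the compositionality direction already proved.
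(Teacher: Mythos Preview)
Your high-level plan aligns with the paper's sketch: project $\mu$ onto one alphabet to obtain a candidate cause, build lists for the projected computation, verify AC1--AC2(c), and derive AC3 by lifting a hypothetical smaller cause back into $sub(\mu)$. Your AC3 argument via Lemma~\ref{lm:decomposing-disjunction-only-if} is exactly what the paper does, though it does not name the lemma explicitly.

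There are, however, two linked technical deviations where a gap may creep in. First, the paper does not commit to one component by WLOG; it constructs \emph{both} candidates $\pi$ and $\pi'$ with their respective lists and then argues by reductio ad absurdum that at least one of them satisfies AC1--AC2(c). Second, the paper's list construction has an additional stage beyond your absorb-and-project step: after encoding the $A$-projections of $\mu$'s causal traces, it performs an explicit \emph{completion} in which every $A$-extension of $l_k \ldots l_m$ that always leads to $\neg\phi$ in $T$ is inserted into the $\overline{\decS}_i$, whether or not it arises as the $A$-projection of something in $\mu$'s traces. The reason this matters is that your key biconditional (``an $A$-trace falsifies $\phi$ iff it is the $A$-projection of an interleaved trace of $\mu$ falsifying $\phi\lor\psi$'') can fail in the $\Rightarrow$ direction when $s'_{n+1} \vDash \psi$ also holds: to exhibit an $A$-trace $\chi_A$ reaching $\neg\phi$ as such a projection you must interleave it with some $B$-run landing in a $\neg\psi$ state, but the canonical $B$-run $l'_p \ldots l'_q$ lands at $s'_{n+1}$, which in that case satisfies $\psi$. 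Pure projection may therefore miss $A$-extensions that must appear in the $\overline{\decS}_i$ for AC2(b) to hold. The paper's completion step closes exactly this hole, and its two-candidate reductio then dispatches the remaining AC2(c) obligation for at least one of $\pi,\pi'$.
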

\begin{proof}[Proof Sketch.]
The statement follows by two intermediate results.

First, we show that one can build $\pi$ or $\pi'$ as above, such that $\pi$ or $\pi'$ satisfy conditions AC1--AC2(c) in Definition~\ref{def:causality2}, given the hypothesis that $\mu$ satisfies AC1--AC2(c) as well.
The reasoning for proving this intermediate result strongly relies on the shape of the lists $\overline{\decS}_i$ and $\overline{\decS}'_j$ corresponding to $\pi$ and $\pi'$, respectively.
We construct the aforementioned lists in three steps.
\begin{enumerate}
\item We start with empty lists $\overline{\decS}_i$ and $\overline{\decS}'_j$.
\item Then, we "encode" causal sequences $\chi \in traces((l_0, \decS_0)\ldots(l_n, \decS_n)) \setminus \{l_0 \ldots l_n\}$ satisfying AC2(c) by definition, into $traces((l_k, \overline{\decS}_k)\ldots(l_m, \overline{\decS}_m))$ and, respectively, $traces((l'_p, \overline{\decS}'_p)\ldots(l'_q, \overline{\decS}'_q))$, via the projections of $\chi$ on $A$ and, respectively, $B$ that satisfy AC2(c) as well.
\item Eventually, we "prepare" $\pi$ for satisfying AC2(b). 
We identify all sequences $\chi \in A^* \setminus traces((l_k, \overline{\decS}_k)$
$\ldots(l_m, \overline{\decS}_m))$ that always lead to $s \vDash \neg \phi$. 
For each such $\chi$ we make the necessary insertions into the lists $\overline{\decS}_i$, so that $\chi$'s are stored as causal traces of computations in $\mathit{Causes}(\phi, T)$. We repeat the "preparation" process for $\pi'$ as well.
\end{enumerate}

Then, we show that $\pi$ or $\pi'$ satisfy AC1--AC2(c) by reductio ad absurdum.
Without loss of generality, assume that $\pi$ satisfies AC1--AC2(c). Showing that $\pi$ has to satisfy $AC3$ as well follows by proof by contradiction. More explicitly, we show that whenever there exists $\widetilde{\pi} \in sub(\pi)$ satisfying AC1--AC2(c), one can construct $\widetilde{\mu} \in sub(\mu)$ such that $\widetilde{\mu}$ satisfies AC1--AC2(c) as well. This contradicts the hypothesis $\mu \in \mathit{Causes}(\phi \lor \psi, T \Par T')$.

\end{proof}

Corollary~\ref{cor:decomposing-disjunction-if} states that a causal computation $\mu$ with respect to an effect $\phi \lor \psi$ in interleaved, but non-communicating LTS's, determines a causal computation $\pi$ in the interleaved component that triggered the first step in $\mu$.

\begin{corollary}[]\label{cor:decomposing-disjunction-if}
Consider LTS's $T = (\prc,s_0,\act,\rightarrow)$ and $T' = (\prc',s_0',B,\rightarrow')$ such that $A \cap B = \emptyset$.
Assume two HML formulae $\phi$ and $\psi$ over $A$ and $B$, respectively.
Whenever $\phi$ and $\psi$ are not immediate effects, the following holds:
\begin{enumerate}
\item[ ] If $\mu = (s_0 \Par s'_0, l_0, \decS_0), \ldots,  (s_n \Par s'_n, l_n, \decS_n), s_{n+1} \Par s'_{n+1} \in \mathit{Causes}(\phi \lor \psi, T \Par T')$ then
\begin{itemize}
\item if $l_0 \in A$ then exists $\pi = (s_k , l_k, \overline{\decS_k}), \ldots,  (s_m , l_m, \overline{\decS_m}), s_{n+1} \in \mathit{Causes}(\phi, T )$; otherwise
\item if $l_0 \in B$ then exists $ \pi' = (s'_p , l'_p, \overline{\decS'_p}), \ldots,  (s'_q , l'_q, \overline{\decS'_q}), s'_{n+1} \in \mathit{Causes}(\psi, T' )$.
\end{itemize}
\end{enumerate}
For all $k \leq i \leq m$: $(s_i, l_i, \overline{\decS}_i)$ corresponds to $(s_i \Par s'_i, l_i, {\decS}_i)$ in $\mu$, whenever $l_i \in A$.
For all $p \leq j \leq q$: $(s'_j, l'_j, \overline{\decS'}_j)$ corresponds to $(s_j \Par s'_j, l'_j, {\decS'}_j)$ in $\mu$, whenever $l'_j \in B$.
Moreover, $l_k \ldots l_m = l_0 \ldots l_n \downarrow A$, $l'_p \ldots l'_q = l_0 \ldots l_n \downarrow B$.
\end{corollary}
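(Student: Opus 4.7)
The plan is to reduce the corollary to Lemma~\ref{lm:decomposing-disjunction-if}, refined by a minimality argument that pins down which disjunct holds as a function of the first action $l_0$ of $\mu$. By the symmetric roles of $(T, \phi)$ and $(T', \psi)$ in the statement, I would treat only the case $l_0 \in A$; the case $l_0 \in B$ is entirely analogous.

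First I would apply Lemma~\ref{lm:decomposing-disjunction-if} to $\mu$ to obtain at least one of $\pi \in \mathit{Causes}(\phi, T)$ or $\pi' \in \mathit{Causes}(\psi, T')$. If the lemma delivers $\pi$, the conclusion for $l_0 \in A$ is immediate: the projection equality $l_k \ldots l_m = l_0 \ldots l_n \downarrow A$ (whose first letter is then $l_0$ itself) and the structural correspondences between the triples $(s_i, l_i, \overline{\decS}_i)$ of $\pi$ and those of $\mu$ are inherited verbatim from that lemma.

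The nontrivial case is when Lemma~\ref{lm:decomposing-disjunction-if} only delivers $\pi' \in \mathit{Causes}(\psi, T')$; this I would rule out by contradiction. Applying Lemma~\ref{lm:decomposing-disjunction-only-if} with the roles of $(T, \phi)$ and $(T', \psi)$ exchanged lifts $\pi'$ back into a computation $\mu'' \in \mathit{Causes}(\phi \lor \psi, T \Par T')$ whose underlying action word equals $l_0 \ldots l_n \downarrow B$ and whose initial state is $s_0 \Par s'_0$. Since $l_0 \in A$ and $A \cap B = \emptyset$, that word is a proper sub-word of $l_0 \ldots l_n$, so $\mu'' \in sub(\mu)$. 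Being itself a cause, $\mu''$ satisfies AC1--AC2(c), which violates the minimality condition AC3 for $\mu$ and contradicts the hypothesis $\mu \in \mathit{Causes}(\phi \lor \psi, T \Par T')$. Hence $\pi$ must exist in this case as well.

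The main obstacle is verifying that the lift $\mu''$ is a bona fide computation of $T \Par T'$ in the sense of Definition~\ref{def:comp-tr}, so that its membership in $sub(\mu)$ is meaningful. This is exactly where the non-communicating hypothesis $A \cap B = \emptyset$ pulls its weight: along any $B$-labelled transition the $T$-component is frozen, so the sequence of $T'$-states traversed by $\pi'$ agrees with the $T'$-projection of $\mu$'s state sequence, and the $T$-component can legitimately remain pinned at $s_0$ throughout $\mu''$. Once this is in place, Lemma~\ref{lm:decomposing-disjunction-only-if} supplies all the remaining bookkeeping, in particular the witness lists $\overline{\decS}_i$ for $\mu''$ that make AC2(b) and AC2(c) go through.
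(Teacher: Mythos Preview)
Your argument is correct and matches the paper's own proof, which simply states that the result follows from Lemma~\ref{lm:decomposing-disjunction-if}, Lemma~\ref{lm:decomposing-disjunction-only-if}, and the minimality condition AC3; you have merely spelled out the mechanics of how these three ingredients combine. One minor remark: the ``obstacle'' you raise in the last paragraph is already absorbed by Lemma~\ref{lm:decomposing-disjunction-only-if}, since membership in $\mathit{Causes}(\phi \lor \psi, T \Par T')$ forces $\mu''$ to be a computation with initial state $s_0 \Par s'_0$, so no separate verification is needed there.
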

\begin{proof}
The result follows immediately by Lemma~\ref{lm:decomposing-disjunction-if}, Lemma~\ref{lm:decomposing-disjunction-only-if} and the minimality condition AC3 in Definition~\ref{def:causality2}.
\end{proof}

Lemma~\ref{lm:decomposing-disjunction-iso} states that, as a consequence of the minimality condition, causal computations with respect to effects $\phi \lor \psi$ in interleaved, non-communicating LTS's capture executions of only one of the interleaved components.

\begin{lemma}[]\label{lm:decomposing-disjunction-iso}
Consider LTS's $T = (\prc,s_0,\act,\rightarrow)$ and $T' = (\prc',s_0',B,\rightarrow')$ such that $A \cap B = \emptyset$.
Assume two HML formulae $\phi$ and $\psi$ over $A$ and $B$, respectively.
Whenever $\phi$ and $\psi$ are not immediate effects and $\mu \in \mathit{Causes}(\phi \lor \psi, T \Par T')$, then either
\begin{enumerate}
\item[ ] $\mu = (s_k \Par s'_0, l_k, {\decS}_k), \ldots,  (s_m \Par s'_0, l_m, {\decS}_m), s_{n+1} \Par s'_0$, 
or\\
$\mu = (s_0 \Par s'_p, l'_p, {\decS}'_p), \ldots,  (s_0 \Par  s'_q, l'_q, {\decS}'_q), s_0 \Par  s'_{n+1}$
\end{enumerate}
such that, for all $k \leq i \leq m$ and $p \leq j \leq q$: $s_i \in \prc$, $s'_j \in \prc'$, $l_i \in A$, $l'_j \in B$,
$\decS_i \in A^*$ and $\decS'_j \in B^*$.
\end{lemma}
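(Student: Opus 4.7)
The plan is to reduce this statement to the minimality condition AC3, using Corollary~\ref{cor:decomposing-disjunction-if} and Lemma~\ref{lm:decomposing-disjunction-only-if} as the two engines that move causes back and forth between the composite system and its components. Given a $\mu \in \mathit{Causes}(\phi \lor \psi, T \Par T')$ of the general form $(s_0 \Par s'_0, l_0, \decS_0), \ldots, (s_n \Par s'_n, l_n, \decS_n), s_{n+1} \Par s'_{n+1}$, I would first apply Corollary~\ref{cor:decomposing-disjunction-if} and split on whether $l_0 \in A$ or $l_0 \in B$. Without loss of generality, assume $l_0 \in A$; the corollary then yields a causal computation $\pi \in \mathit{Causes}(\phi, T)$ whose action sequence is exactly the projection $l_0 \ldots l_n \downarrow A$.

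Next, I would feed $\pi$ back into Lemma~\ref{lm:decomposing-disjunction-only-if} to lift it to a causal computation $\hat{\mu} \in \mathit{Causes}(\phi \lor \psi, T \Par T')$ of shape $(s_0 \Par s'_0, l_{i_0}, \overline{\decS}_{i_0}), \ldots, (s_m \Par s'_0, l_{i_m}, \overline{\decS}_{i_m}), s_{n+1} \Par s'_0$, where the action sequence $l_{i_0} \ldots l_{i_m}$ equals $l_0 \ldots l_n \downarrow A$ and the second component remains $s'_0$ throughout (since no $B$-action is executed and $A \cap B = \emptyset$). The crucial observation is then that $l_{i_0} \ldots l_{i_m}$ is obtained from $l_0 \ldots l_n$ by deleting precisely the $B$-letters. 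If $\mu$ contained at least one such letter, this would be a strict deletion, so by Definition~\ref{def:comp-tr} the computation $\hat{\mu}$ would belong to $sub(\mu)$ while still satisfying AC1--AC2(c). This would contradict the minimality condition AC3 that $\mu$ must satisfy as an element of $\mathit{Causes}(\phi \lor \psi, T \Par T')$. Hence $l_0 \ldots l_n \in A^*$, the second components stay $s'_0$ throughout, and $\mu$ has the first form stated in the lemma. The case $l_0 \in B$ is symmetric and produces the second form.

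The main obstacle will be to argue carefully that $\hat{\mu}$ is indeed a well-formed sub-computation of $\mu$ in the sense of Definition~\ref{def:comp-tr}: this requires verifying (i) that each intermediate state of $\hat{\mu}$ has $s'_0$ as its right component, which follows from non-communication and the fact that stripping $B$-actions leaves $T'$ in place, and (ii) that the decoration lists $\overline{\decS}_{i_j}$ obtained from Lemma~\ref{lm:decomposing-disjunction-only-if} give a bona fide computation in $T \Par T'$ (size-compatibility and the reachability condition on $traces$). A small but necessary care point is the degenerate case where $\mu$ has a single action: here the projection is non-trivial (either $\mu$'s single action is in $A$, in which case $\mu = \hat{\mu}$ and we are done directly, or it lies in $B$ and the symmetric argument applies), so AC3 is not vacuously satisfied.
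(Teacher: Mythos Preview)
Your approach is essentially the paper's own: project $\mu$ down to a component cause, lift it back via Lemma~\ref{lm:decomposing-disjunction-only-if}, and invoke AC3 to force $\mu=\hat\mu$. The only cosmetic difference is that you enter through Corollary~\ref{cor:decomposing-disjunction-if} (case-splitting on $l_0$) whereas the paper invokes Lemma~\ref{lm:decomposing-disjunction-if} directly and handles the two cases by a ``without loss of generality'' clause; since the corollary is itself derived from that lemma, the two routes coincide.
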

\begin{proof}
Assume $\mu = (s_0 \Par s'_0, l_0, \overline{\decS}_0), \ldots,  (s_n \Par s'_n, l_n, \overline{\decS}_n), s_{n+1} \Par s'_{n+1} \in \mathit{Causes}(\phi \lor \psi, T \Par T')$.
Assume, without loss of generality, that by Lemma~\ref{lm:decomposing-disjunction-if} there exists a computation:
\[ \widetilde{\pi} = (s_k , l_k, \widetilde{\decS}_k), \ldots,  (s_m , l_m, \widetilde{\decS}_m), s_{n+1} \in \mathit{Causes}(\phi, T )\]
such that
for all $k \leq i \leq m$: $(s_i, l_i, \widetilde{\decS}_i)$ corresponds to $(s_i \Par s'_i, l_i, \overline{\decS}_i)$ in $\mu$, whenever $l_i \in A$.
Moreover, $l_k \ldots l_m = l_0 \ldots l_n \downarrow A$.
Then, by Lemma~\ref{lm:decomposing-disjunction-only-if}, it follows that there exists a computation
\[\widehat{\mu} = (s_k \Par s'_0, l_k, \widehat{\decS}_k), \ldots,  (s_m \Par s'_0, l_m, \widehat{\decS}_m), s_{n+1} \Par s'_0 \in \mathit{Causes}(\phi \lor \psi, T \Par T').\]
Additionally, observe that $\widehat{\mu} \in sub(\mu)$. This violates the minimality condition AC3 for $\mu$, unless $\mu = \widehat{\mu}$. This proves our initial statement.
\end{proof}

Theorem~\ref{thm:decomposing-disjunction-iso} is the main result of this section.
Intuitively, it states that reasoning on causality with respect to an effect $\phi \lor \psi$ in the context of non-communicating, interleaved LTS's is equivalent to reasoning on causality for $\phi$ or $\psi$ in the context of the corresponding interleaved components.

\begin{theorem}[(De-)composing Disjunction]\label{thm:decomposing-disjunction-iso}
Consider LTS's $T = (\prc,s_0,\act,\rightarrow)$ and $T' = (\prc',s_0',B,\rightarrow')$ such that $A \cap B = \emptyset$.
Assume two HML formulae $\phi$ and $\psi$ over $A$ and $B$, respectively.
Whenever $\phi$ and $\psi$ are not immediate effects, the following holds: 
\begin{equation}
T \Par T' \downarrow (\phi \lor \psi) \,\,\simeq\,\, 
T \downarrow \phi + T' \downarrow \psi.
\end{equation}
\end{theorem}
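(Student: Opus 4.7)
The plan is to establish the isomorphism by exhibiting a natural state-and-transition bijection between $T \Par T' \downarrow (\phi \lor \psi)$ and $T \downarrow \phi + T' \downarrow \psi$, relying directly on Lemmas~\ref{lm:decomposing-disjunction-only-if}--\ref{lm:decomposing-disjunction-iso} for all the heavy lifting; no fresh computation-level reasoning should be needed.

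First, I would use Lemma~\ref{lm:decomposing-disjunction-iso} to characterize the shape of the left-hand side. Every $\mu \in \mathit{Causes}(\phi \lor \psi, T \Par T')$ has either the form $(s_k \Par s'_0, l_k, \decS_k), \ldots, (s_m \Par s'_0, l_m, \decS_m), s_{n+1} \Par s'_0$ with all $l_i \in A$ (an ``$A$-branch'' in which the $T'$-state stays frozen at $s'_0$) or the symmetric ``$B$-branch'' in which the $T$-state stays frozen at $s_0$. Consequently, the states of $T \Par T' \downarrow (\phi \lor \psi)$ split into the initial state $s_0 \Par s'_0$, states of the form $s_i \Par s'_0$, and states of the form $s_0 \Par s'_j$; the three groups are pairwise disjoint because $A \cap B = \emptyset$ and because $\phi, \psi$ are not immediate effects, so neither branch collapses into the initial state. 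The transitions are forced by the interleaving rules: $s_i \Par s'_0 \xrightarrow{a} s_{i+1} \Par s'_0$ exactly when $s_i \xrightarrow{a} s_{i+1}$ in $T$, and symmetrically for the $B$-branch.

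Next, I would define $f\colon T \Par T' \downarrow (\phi \lor \psi) \to T \downarrow \phi + T' \downarrow \psi$ by $s_0 \Par s'_0 \mapsto s_0 + s'_0$, $s_i \Par s'_0 \mapsto s_i$, and $s_0 \Par s'_j \mapsto s'_j$, using the states of $T \downarrow \phi$ and $T' \downarrow \psi$ on the right. Surjectivity of $f$ is the content of Lemma~\ref{lm:decomposing-disjunction-only-if}: any $\pi \in \mathit{Causes}(\phi, T)$ lifts to a cause $\mu \in \mathit{Causes}(\phi \lor \psi, T \Par T')$ whose intermediate states are precisely the $s_i \Par s'_0$, so every state of $T \downarrow \phi$ appears in the $A$-branch; the symmetric argument covers $T' \downarrow \psi$. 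Injectivity follows from the disjointness of the three groups of states noted above. For transition preservation, in the $A$-branch the operational rules for $\Par$ and $+$ both reduce to ``inherit the transition of $T$'' (once the sum has committed past its initial state), and symmetrically for $B$; at the initial state, the available first actions on the left are exactly those of $T$ plus those of $T'$ (again by Lemma~\ref{lm:decomposing-disjunction-iso}), which matches the $+$ semantics.

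The main obstacle is keeping the correspondence honest at the level of causal computations, not just underlying transitions: the states of the causal projection are only those reached by computations that witness all of AC1--AC3, so I must argue that every $s_i$ appearing in some cause $\pi$ for $\phi$ in $T$ actually does appear in some cause $\mu$ for $\phi \lor \psi$ in $T \Par T'$ (Lemma~\ref{lm:decomposing-disjunction-only-if}), and conversely that every $s_i \Par s'_0$ reached in a cause for $\phi \lor \psi$ back-projects to a cause for $\phi$ in $T$ (Corollary~\ref{cor:decomposing-disjunction-if}, whose applicability is guaranteed because the $A$-branch starts with $l_0 \in A$). Once both directions are in place, $f$ is an isomorphism of LTSs and the theorem follows.
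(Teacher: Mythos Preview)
Your proposal is correct and follows essentially the same approach as the paper: define the natural bijection $f$ sending $s_0 \Par s'_0 \mapsto s_0 + s'_0$, $s_i \Par s'_0 \mapsto s_i$, $s_0 \Par s'_j \mapsto s'_j$, and verify it is an isomorphism using Lemma~\ref{lm:decomposing-disjunction-iso} for the shape of the left-hand side and Corollary~\ref{cor:decomposing-disjunction-if}/Lemma~\ref{lm:decomposing-disjunction-only-if} for the two directions of the state correspondence. The paper's proof is terser (it simply exhibits $f$ and $f^{-1}$ and cites Corollary~\ref{cor:decomposing-disjunction-if} and Lemma~\ref{lm:decomposing-disjunction-iso}), whereas you spell out more carefully which lemma supplies surjectivity versus injectivity and why the three state groups are disjoint, but the underlying argument is the same.
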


\begin{proof}
Let $(\prc_{\Par},s_{0} \Par s'_0,A \cup B,\rightarrow_{\Par}) = (T \Par T') \downarrow (\phi \lor \psi)$ and $(\prc_{+},s_{0} + s'_0,A \cup B,\rightarrow_{+}) = (T \downarrow \phi) + (T' \downarrow \psi)$, respectively.
The result follows immediately by Corollary~\ref{cor:decomposing-disjunction-if}, Lemma~\ref{lm:decomposing-disjunction-iso} and the semantics of the non-deterministic choice operator ($+$), where the isomorphic structure is underlined by:
\[
\begin{array}{cc}
f : \prc_{\Par} \rightarrow \prc_{+} & \hspace{30pt} f^{-1}:\prc_{+} \rightarrow \prc_{\Par}\\[0.5ex]
\begin{array}{rcl}
f(s_0 \Par s'_0) & = & s_0 + s'_0\\
f(p \Par q) & = & \left \{
\begin{array}{lcl}
p & \textnormal{if} & q = s'_0 \land p \not = s_0\\
q & \textnormal{if} & p = s_0 \land q \not = s'_0
\end{array}
\right.
\end{array} &
\begin{array}{rcl}
f^{-1}(s_0 + s'_0) & = & s_0 \Par s'_0\\
f^{-1}(p) & = & \left \{
\begin{array}{lcl}
p \Par s'_0 & \textnormal{if} & p \in \prc \land p \not = s_0\\
s_0 \Par p & \textnormal{if} & p \in \prc' \land p \not = s'_0
\end{array}
\right.
\end{array}
\end{array}
\]
\end{proof}

\begin{example}\label{ex:dec-disj}
For an example, consider two LTS's $T$ and $T'$ with initial states $s_0$ and $p_0$, respectively, depicted as in Figure~\ref{fig:ex-dec}.
Let $\phi = \dia{h} \top$ and $\psi = \dia{h'} \top$ be two HML formulae.
It is straightforward to see that $T \downarrow \phi$ is defined by the dotted transition $s_0 \dotarrow{a} s_1$ in $T$, whereas $T' \downarrow \psi$ is $p_0 \dotarrow{d} p_1 \dotarrow{e} p_2$.
The interleaving of $T$ and $T'$ is the LTS originating in $s_0 \Par p_0$ in Figure~\ref{fig:ex-dec}.
At a closer look, one can see that $T \Par T' \downarrow (\phi \lor \psi)$ is the transition system defined by the dotted transitions  $s_0\Par p_0 \dotarrow{a} s_1 \Par p_0$ and $s_0 \Par p_0 \dotarrow{d} s_0 \Par p_1 \dotarrow{e} s_0 \Par p_2$, which is obviously isomorphic with $T \downarrow \phi + T' \downarrow \psi$.


\begin{figure}
\scalebox{.5}{
\begin{tabular}{p{4.5cm}}
\includegraphics[bb=0 0 0 550]{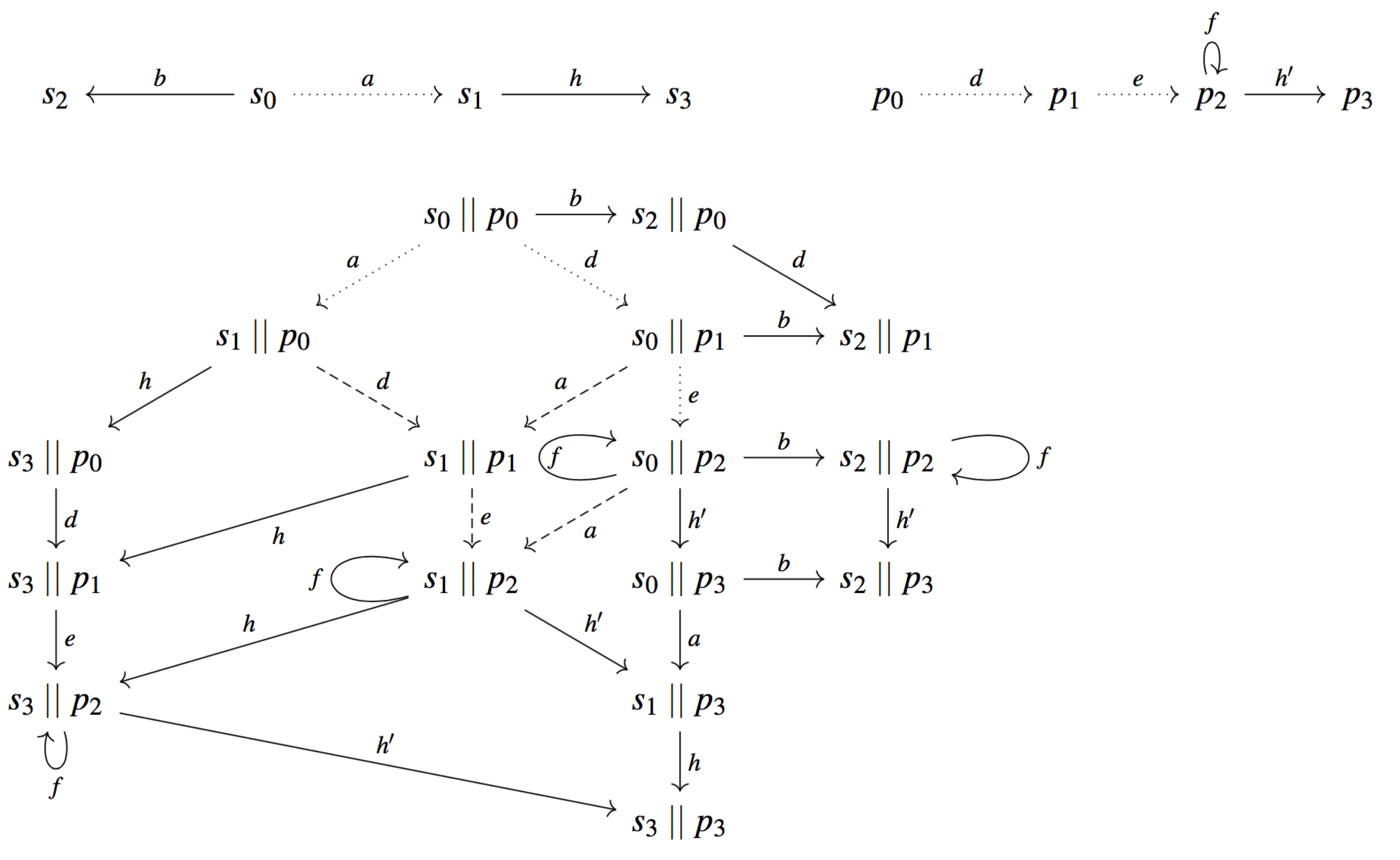}
\end{tabular} 
}
\caption{(De-)composing causality\label{fig:ex-dec}} 
\end{figure}
\end{example}

\subsection{(De-)Composing Conjunction}\label{sec:dec-conj}
In what follows we show that reasoning on causality with respect to conjunctions of HML formulae $\phi \land \psi$ can be performed in a compositional fashion.

Lemma~\ref{lm:decomposing-conjunction-if} states that causalities in two non-communicating LTS's are reflected within their interleaving as well.

\begin{lemma}[]\label{lm:decomposing-conjunction-if}
Consider LTS's $T = (\prc,s_0,\act,\rightarrow)$ and $T' = (\prc',s_0',B,\rightarrow')$ such that $A \cap B = \emptyset$.
Assume two HML formulae $\phi$ and $\psi$ over $A$ and $B$, respectively.
Whenever $\phi$ and $\psi$ are not immediate effects, the following holds. If
\begin{enumerate}
\item[ ] $\pi = (s_k , l_k, \decS_k), \ldots,  (s_m , l_m, \decS_m), s_{m+1} \in \mathit{Causes}(\phi, T )$ and\\
$\pi' = (s'_p , l'_p, \decS'_p), \ldots,  (s'_q , l'_q, \decS'_q), s'_{q+1} \in \mathit{Causes}(\psi, T')$ then\\
$\mu = (s_0 \Par s'_0, l_0, \overline{\decS}_0), \ldots,  (s_n \Par s'_n, l_n, \overline{\decS}_n), s_{n+1} \Par s'_{n+1} \in \mathit{Causes}(\phi \land \psi, T \Par T')$
\end{enumerate}
for all $\mu$ such that $s_0 \Par s'_0 \xrightarrow{l_0} \ldots s_n \Par s'_n \xrightarrow{l_n} s_{n+1} \Par s'_{n+1}$ is an execution sequence in $s_k \xrightarrow{l_k} \ldots s_m \xrightarrow{l_m} s_{m+1} \Par s'_p \xrightarrow{l'_p} \ldots s'_q \xrightarrow{l'_q} s'_{q+1}$, and
$s_0 \Par s'_0 = s_k \Par s'_p$, $s_n \Par s'_n = s_m \Par s'_q$, $s_{n+1} \Par s'_{n+1} = s_{m+1} \Par s'_{q+1}$,
$l_0 \ldots l_n \downarrow A = l_k \ldots l_m$ and $l_0 \ldots l_n \downarrow B = l'_p \ldots l'_q$.
\end{lemma}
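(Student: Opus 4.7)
The strategy is to construct the dec-lists $\overline{\decS}_0, \ldots, \overline{\decS}_n$ for $\mu$ and then verify AC1--AC3 from Definition~\ref{def:causality2}. The groundwork is a preservation observation: since $A \cap B = \emptyset$, for every HML formula $\phi$ over $A$ and every state $s$ of $T$ and $s'$ of $T'$, we have $s \vDash \phi$ in $T$ iff $s \Par s' \vDash \phi$ in $T \Par T'$. This follows by structural induction on $\phi$, using that $T'$ has no transitions on $A$-actions (and symmetrically for $\psi$ and $B$).

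Given the hypothesis, AC1 for $\mu$ follows directly: $s_{m+1} \vDash \phi$ in $T$ and $s'_{q+1} \vDash \psi$ in $T'$ lift to $s_{m+1} \Par s'_{q+1} \vDash \phi \land \psi$, and the underlying execution of $\mu$ is exactly the interleaved path ending at that state. For AC2(a), the witness $\chi \in A^*$ given by AC2(a) of $\pi$ (ending in some $s \vDash \neg \phi$) yields $s_0 \Par s'_0 \steps{\chi} s \Par s'_0$, and by preservation $s \Par s'_0 \vDash \neg \phi$, hence $\vDash \neg (\phi \land \psi)$.

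The crux is the construction of $\overline{\decS}_0, \ldots, \overline{\decS}_n$. I plan to build them in two layers. Since each index $i \in [0,n]$ corresponds, via the interleaving, to a unique index either in $[k,m]$ (when $l_i \in A$) or in $[p,q]$ (when $l_i \in B$), every slot of the dec-lists of $\pi$ induces a slot of the $\overline{\decS}_i$'s by placing the appropriate word of $\decS_j$ at positions with $l_i \in A$ and $\eseq$ at positions with $l_i \in B$. Symmetrically, each slot of the dec-lists of $\pi'$ induces a slot via the $B$-positions. When both source families are infinite they are dovetailed; when both are finite they are concatenated; the mixed case is handled by padding the finite family with its last slot to preserve size-compatibility. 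Every resulting trace reaches a state of the form $s \Par s'_{q+1}$ with $s \vDash \neg \phi$ (from the $\pi$-layer), or $s_{m+1} \Par s'$ with $s' \vDash \neg \psi$ (from the $\pi'$-layer), and hence falsifies $\phi \land \psi$, establishing AC2(c). For AC2(b), any trace $l_0 \chi_0 \ldots l_n \chi_n$ that is either the base $l_0 \ldots l_n$ or lies outside $traces((l_0,\overline{\decS}_0)\ldots(l_n,\overline{\decS}_n))$ projects, via the $A$- and $B$-alphabets, to traces meeting the hypothesis of AC2(b) for $\pi$ and $\pi'$; the $T$- and $T'$-components of the reached state then satisfy $\phi$ and $\psi$ respectively, which by preservation lifts to $\phi \land \psi$.

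Finally, AC3 is shown by reductio ad absurdum: if some $\widetilde{\mu} \in sub(\mu)$ satisfied AC1--AC2(c) for $\phi \land \psi$, then projecting its action sequence onto $A$ and $B$ would yield candidate sub-computations $\widetilde{\pi} \in sub(\pi)$ and $\widetilde{\pi}' \in sub(\pi')$ whose dec-lists are obtained by the inverse of the construction above, and the same preservation argument shows that $\widetilde{\pi}$ and $\widetilde{\pi}'$ satisfy AC1--AC2(c) with respect to $\phi$ and $\psi$ respectively. Since $\widetilde{\mu}$ is a strict sub-computation of $\mu$, at least one of the two projections is strict in its component, contradicting AC3 for $\pi$ or $\pi'$. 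The main obstacle will be the bookkeeping in the dec-list construction: maintaining size-compatibility when the two source families have mismatched cardinalities, and ensuring that the encoded disabling traces coincide with exactly the interleaved projections of $\pi$- and $\pi'$-disabling traces, so that AC2(b) and AC2(c) cleanly reduce to the corresponding conditions on $\pi$ and $\pi'$ without introducing spurious extra extensions that would break minimality.
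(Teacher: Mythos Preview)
Your overall architecture matches the paper's: verify AC1--AC2(c) from the hypotheses on $\pi,\pi'$ via a preservation lemma, build the lists $\overline{\decS}_i$, and handle AC3 by projecting a hypothetical strict sub-cause of $\mu$ back to strict sub-causes of $\pi$ or $\pi'$. The AC1, AC2(a), and AC3 parts are essentially the same as the paper's and are fine.

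There is, however, a genuine gap in your dec-list construction and the AC2(b) step that rests on it. In your $\pi$-layer you place the $\decS_j$-word at the $A$-positions and $\eseq$ at the $B$-positions. This encodes, for each disabling slot of $\pi$, exactly \emph{one} interleaving with the base $B$-sequence $l'_p\ldots l'_q$. But the trace-shape $l_0\chi_0\ldots l_n\chi_n$ allows the $A$-insertions to be distributed arbitrarily across the $\chi_i$'s, and allows additional $B$-letters too. Concretely, take $l_0=a\in A$, $l_1=b\in B$, and suppose $\pi$ has the single disabling word $cc$ at position $0$. Your $\pi$-layer yields only the trace $a\,cc\,b$. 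The trace $a\,c\,b\,c$ has the same $A$-projection $acc$ and the same $B$-projection $b$, so it reaches a state $s\Par s'_{q+1}$ with $s\vDash\neg\phi$; yet it is \emph{not} in your $traces((l_0,\overline{\decS}_0)(l_1,\overline{\decS}_1))$. Hence AC2(b) fails for your $\mu$: this trace lies outside your set but leads to $\neg(\phi\land\psi)$. Correspondingly, your AC2(b) justification---that any trace outside the set projects to something ``meeting the hypothesis of AC2(b) for $\pi$ and $\pi'$''---is not valid: $a\,c\,b\,c$ projects on $A$ to $acc$, which is \emph{in} $\pi$'s disabling set, not outside it.

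The paper avoids this by building $\overline{\decS}_i$ so that, for every disabling trace $\chi$ of $\pi$, \emph{all} interleavings of $\chi$ with words $\chi'\in B^*$ (that are realizable in $T\Par T'$ and fit the shape $l_0 w_0\ldots l_n w_n$) are captured in $traces((l_0,\overline{\decS}_0)\ldots(l_n,\overline{\decS}_n))$, and symmetrically for $\pi'$. With that enlargement, a trace outside the set has $A$-projection outside $\pi$'s disabling set \emph{and} $B$-projection outside $\pi'$'s disabling set, and then your preservation argument for AC2(b) goes through. So the fix is not to pad with $\eseq$ at the ``other'' alphabet's positions, but to saturate over all admissible words there; once you do that, the rest of your sketch (including the size-compatibility bookkeeping and the AC3 contradiction) lines up with the paper's proof.
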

\begin{proof}[Proof Sketch]
The statement is a consequence of two intermediate results.

First we show that whenever $\pi$ and $\pi'$ satisfy conditions AC1--AC2(c) in Definition~\ref{def:causality2}, one can build $\mu$ as above, such that $\mu$ satisfies AC1--AC2(c) as well.
Showing that $\mu$ satisfies AC1 and AC2 is immediate, by the assumption that both $\pi$ and $\pi'$ satisfy AC1--AC2(c) and the fact that $\phi$ and $\psi$ are not immediate effects.
Proving that AC2(b) and AC2(c) hold for $\mu$ strongly relies on the lists $\overline{\decS}_i$ in $\mu$.
The construction of $\overline{\decS}_i$'s is as follows.
\begin{enumerate}
\item We start with $\overline{\decS}_i$'s set to the empty list $[\,]$.
\item Then, note that all causal traces $\chi$ corresponding to $\pi$ are causal for $\neg \phi \lor \neg \psi$ as well. Hence, we consider sequences $\overline{\chi}$ from the interleaving of such $\chi$ with $\chi' \in B^*$
and make the corresponding additions to all $\overline{\decS}_i$'s, such that $\overline{\chi}$ is captured within $traces((l_0, \overline{\decS}_0) \ldots (l_n, \overline{\decS}_n))$ as well.
Symmetrically, repeat the procedure for all causal traces corresponding to $\pi'$.

Intuitively, this step works also as a "cleaning" step preparing $\mu$ to satisfy AC2(b) w.r.t. $\phi \land \psi$.
\end{enumerate}
At this point AC2(b) and AC2(c) hold for $\mu$, by the construction of lists $\overline{\decS}_i$ above.

Proving minimality of $\mu$ follows by reductio ad absurdum. The intuition is as follows. Whenever there exists $\mu' \in sub(\mu)$ such that $\mu'$ satisfies AC1--AC2(c), one can build $\widetilde{\pi} \in sub(\pi)$ and $\widetilde{\pi'} \in sub(\pi')$ such that $\widetilde{\pi}$ and $\widetilde{\pi'}$ satisfy AC1--AC2(c). This contradicts the hypothesis $\pi \in  \mathit{Causes}(\phi, T)$ and $\pi' \in  \mathit{Causes}(\psi, T')$.

\end{proof}


Lemma~\ref{lm:decomposing-conjunction-only-if} states that causality with respect to an HML formula $\phi \land \psi$ in the context of interleaved, non-communicating LTS's, determines causality with respect to $\phi$ and $\psi$ in the corresponding interleaved components.

\begin{lemma}[]\label{lm:decomposing-conjunction-only-if}
Consider LTS's $T = (\prc,s_0,\act,\rightarrow)$ and $T' = (\prc',s_0',B,\rightarrow')$ such that $A \cap B = \emptyset$.
Assume two HML formulae $\phi$ and $\psi$ over $A$ and $B$, respectively.
Whenever $\phi$ and $\psi$ are not immediate effects, the following holds.

If $\mu = (s_0 \Par s'_0, l_0, \overline{\decS}_0), \ldots,  (s_n \Par s'_n, l_n, \overline{\decS}_n), s_{n+1} \Par s'_{n+1} \in \mathit{Causes}(\phi \land \psi, T \Par T')$, 
then there exist
\begin{enumerate}
\item[ ] $\pi = (s_k , l_k, \decS_k), \ldots,  (s_m , l_m, \decS_m), s_{m+1} \in \mathit{Causes}(\phi, T )$ and\\
$\pi' = (s'_p , l'_p, \decS'_p), \ldots,  (s'_q , l'_q, \decS'_q), s'_{q+1} \in \mathit{Causes}(\psi, T')$
\end{enumerate}
where $s_k \xrightarrow{l_k} \ldots s_m \xrightarrow{l_m} s_{m+1} \Par s'_p \xrightarrow{l'_p} \ldots s'_q \xrightarrow{l'_q} s'_{q+1}$ includes the execution sequence
$s_0 \Par s'_0 \xrightarrow{l_0} \ldots s_n \Par s'_n \xrightarrow{l_n} s_{n+1} \Par s'_{n+1}$, and
$s_k \Par s'_p = s_0 \Par s'_0$, $s_m \Par s'_q = s_n \Par s'_n$, $s_{m+1} \Par s'_{q+1} = s_{n+1} \Par s'_{n+1}$,
$l_k \ldots l_m = l_0 \ldots l_n \downarrow A$ and $l'_p \ldots l'_q = l_0 \ldots l_n \downarrow B$.
\end{lemma}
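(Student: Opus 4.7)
The plan is to carry out the argument symmetric to that of Lemma~\ref{lm:decomposing-conjunction-if} but in the reverse direction: starting from $\mu \in \mathit{Causes}(\phi \land \psi, T \Par T')$, I extract a cause $\pi$ for $\phi$ in $T$ and a cause $\pi'$ for $\psi$ in $T'$. The skeletons of $\pi$ and $\pi'$ are obtained by projecting $\mu$'s action sequence onto $A$ and $B$ respectively; since $A \cap B = \emptyset$, every transition of $\mu$ changes exactly one component, so the projections yield genuine transition sequences in $T$ and $T'$ whose final states are $s_{n+1}$ and $s'_{n+1}$. Because $\phi$ has alphabet $A$ and $\psi$ has alphabet $B$, the satisfaction $s_{n+1} \Par s'_{n+1} \vDash \phi \land \psi$ projects to $s_{m+1} \vDash \phi$ and $s'_{q+1} \vDash \psi$, giving AC1 for both; AC2(a) follows from the non-immediacy of $\phi$ and $\psi$.

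The heart of the proof is constructing the lists $\decS_k, \ldots, \decS_m$ and $\decS'_p, \ldots, \decS'_q$, and for this I would first prove an \emph{outcome-uniformity} property: for every executable $\xi = l_k \chi_k \ldots l_m \chi_m \in A^*$ from $s_0$ in $T$, the reachable states via $\xi$ either all satisfy $\phi$ or all satisfy $\neg\phi$; and symmetrically for $B$-words in $T'$ with respect to $\psi$. The proof interleaves $\xi$ with $l'_p \ldots l'_q$ in the specific order induced by $\mu$, producing $\chi' \in (A \cup B)^*$ with $\chi' \downarrow A = \xi$ and $\chi' \downarrow B = l'_p \ldots l'_q$. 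Any state reachable by $\chi'$ in $T \Par T'$ has the form $\tilde{s} \Par \tilde{s}'$ where $\tilde{s}$ is reachable via $\xi$ in $T$ and $\tilde{s}'$ is reachable via $l'_p \ldots l'_q$ in $T'$; applying AC1 and AC2(b) of $\mu$ to the positive trace $l_0 \ldots l_n$ gives $\tilde{s}' \vDash \psi$ uniformly. A case split on whether $\chi' \in traces(\mu)$, invoking AC2(c) or AC2(b) of $\mu$ respectively, forces all such $\tilde{s}$ to share the same truth-value of $\phi$.

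With uniformity in hand, I populate $\decS_k, \ldots, \decS_m$ column-by-column: for every extension $\xi = l_k \chi_k \ldots l_m \chi_m$ with $\xi \neq l_k \ldots l_m$ that leads uniformly to $\neg\phi$-states, add the column $(\chi_k, \ldots, \chi_m)$; extensions leading to $\phi$ and non-executable extensions are omitted. By construction AC2(c) holds, and AC2(b) holds because every uncaptured executable extension of the prescribed form lands uniformly in $\phi$-states. An analogous construction yields $\decS'_p, \ldots, \decS'_q$ for $\pi'$. Size-compatibility is arranged by taking infinite lists when infinitely many extensions must be captured and finite lists of uniform length (padded with repeated dummy columns if necessary) otherwise.

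Minimality AC3 for $\pi$ is proved by contradiction using the intermediate AC1--AC2(c) preservation step inside Lemma~\ref{lm:decomposing-conjunction-if}: were there $\widetilde{\pi} \in sub(\pi)$ satisfying AC1--AC2(c), combining $\widetilde{\pi}$ with $\pi'$ via Lemma~\ref{lm:decomposing-conjunction-if}, choosing the interleaving of labels that inherits $\mu$'s original ordering with the positions deleted in passing from $\pi$ to $\widetilde{\pi}$ omitted, produces $\widetilde{\mu} \in sub(\mu)$ still satisfying AC1--AC2(c), contradicting AC3 of $\mu$; the argument for $\pi'$ is symmetric. I expect the principal obstacle to be the book-keeping here: specifically, showing that the reconstructed $\widetilde{\mu}$ is genuinely a sub-computation of $\mu$ in the sense of Definition~\ref{def:comp-tr}, which requires carefully aligning the chosen interleaving, the combined $\overline{\decS}_i$ lists produced by Lemma~\ref{lm:decomposing-conjunction-if}, and the state sequence so that they faithfully match a subsequence of $\mu$.
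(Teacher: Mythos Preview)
Your proposal is correct and follows essentially the same approach as the paper's proof sketch: project $\mu$ onto $A$ and $B$ to obtain the skeletons of $\pi$ and $\pi'$, build the lists $\decS_i$ and $\decS'_j$ by collecting all appropriately shaped $A$-extensions (resp.\ $B$-extensions) that lead to $\neg\phi$ (resp.\ $\neg\psi$), and derive AC3 by contradiction via the AC1--AC2(c) preservation step of Lemma~\ref{lm:decomposing-conjunction-if}. Your explicit \emph{outcome-uniformity} lemma is a helpful sharpening of what the paper leaves implicit in its ``preparation'' step~3; it cleanly explains why, after the construction, every executable extension not captured in $traces(\pi)$ must land in $\phi$-states, which is exactly what AC2(b) needs.
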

\begin{proof}[Proof Sketch]
First, we show that one can build $\pi$ and $\pi'$ as above, such that $\pi$ and $\pi'$ satisfy conditions AC1--AC2(c) in Definition~\ref{def:causality2}, given the hypothesis that $\mu$ satisfies AC1--AC2(c) as well.
The reasoning for proving this intermediate result strongly relies on the shape of the lists ${\decS}_i$ and ${\decS}'_j$ corresponding to $\pi$ and $\pi'$, respectively.
We construct the aforementioned lists in three steps.
\begin{enumerate}
\item We start with empty lists ${\decS}_i$ and ${\decS}'_j$.
\item Then, we "encode" causal sequences $\chi \in traces((l_0, \overline{\decS}_0)\ldots(l_n, \overline{\decS}_n)) \setminus \{l_0 \ldots l_n\}$ satisfying AC2(c) by definition, into $traces((l_k, {\decS}_k)\ldots(l_m, {\decS}_m))$ and, respectively, $traces((l'_p, {\decS}'_p)\ldots(l'_q, {\decS}'_q))$ as follows.
Whenever $\chi$ always leads to states satisfying $\neg \phi$, make the corresponding additions to $\decS_i$ such that the projection of $\chi$ on $A$ is stored within $traces((l_k, {\decS}_k)\ldots(l_m, {\decS}_m))$.
Symmetrically, repeat the procedure for causal sequences $\chi$ that always lead to states satisfying $\neg \psi$.

\item Eventually, we "prepare" $\pi$ for satisfying AC2(b). 
We identify all sequences $\chi \in A^* \setminus traces((l_k, \overline{\decS}_k)$
$\ldots(l_m, \overline{\decS}_m))$ that always lead to $s \vDash \neg \phi$. 
For each such $\chi$ we make the necessary insertions into the lists $\overline{\decS}_i$, so that $\chi$ is stored as a causal trace of $\pi$. We repeat the "preparation" process for $\pi'$ as well.
\end{enumerate}
Then, we show that $\pi$ and $\pi'$ satisfy AC1--AC2(c) by reductio ad absurdum.
Showing that $\pi$ has to satisfy $AC3$ follows by proof by contradiction as well. Intuitively, we show that whenever there exists $\widetilde{\pi} \in sub(\pi)$ satisfying AC1--AC2(c), one can construct $\widetilde{\mu} \in sub(\mu)$ such that $\widetilde{\mu}$ satisfies AC1--AC2(c) as well. This contradicts the hypothesis $\mu \in \mathit{Causes}(\phi \land \psi, T \Par T')$.
Similar reasoning for proving that $\pi'$ has to satisfy $AC3$.

\end{proof}

Theorem~\ref{thm:decomposing-conjunction-iso} is the main result of this section.
Intuitively, it states that reasoning on causality with respect to an effect $\phi \land \psi$ in the context of non-communicating, interleaved LTS's is equivalent to reasoning on causality for $\phi$ and $\psi$ in the context of the corresponding interleaved components.

\begin{theorem}[(De-)composing Conjunction]\label{thm:decomposing-conjunction-iso}
Consider $T = (\prc,s_0,\act,\rightarrow)$ and $T' = (\prc',s_0',B,\rightarrow')$ such that $A \cap B = \emptyset$.
Assume two HML formulae $\phi$ and $\psi$ over $A$ and $B$, respectively.
Whenever $\phi$ and $\psi$ are not immediate effects, the following holds: 
\begin{equation}
T \Par T' \downarrow (\phi \land \psi) \,\,=\,\, 
(T \downarrow \phi) \Par (T' \downarrow \psi).
\end{equation}
\end{theorem}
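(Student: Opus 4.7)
The plan is to establish literal equality (not merely isomorphism) of the two LTS's by a double-inclusion argument on both state sets and transition relations, directly invoking the two symmetric lemmas just proved. Since each side is an LTS over the alphabet $A \cup B$ with initial state $s_0 \Par s'_0$ and states of the form $s \Par s'$, it suffices to show that a pair $s \Par s'$ (respectively, a transition between such pairs) appears in the causal projection of $T \Par T'$ with respect to $\phi \land \psi$ if and only if it appears in the parallel composition of the component projections.

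For states, the inclusion ``$\subseteq$'' follows from Lemma~\ref{lm:decomposing-conjunction-only-if}: any state $s_i \Par s'_j$ occurring in a cause $\mu \in \mathit{Causes}(\phi \land \psi, T \Par T')$ yields, via Lemma~\ref{lm:decomposing-conjunction-only-if}, causes $\pi \in \mathit{Causes}(\phi, T)$ and $\pi' \in \mathit{Causes}(\psi, T')$ whose execution sequences are precisely the $A$- and $B$-projections of $\mu$'s labelled sequence; consequently $s_i$ occurs in $\pi$ and $s'_j$ occurs in $\pi'$, so $s_i \in T \downarrow \phi$ and $s'_j \in T' \downarrow \psi$, whence $s_i \Par s'_j$ is a state of $(T \downarrow \phi) \Par (T' \downarrow \psi)$. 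The converse ``$\supseteq$'' uses Lemma~\ref{lm:decomposing-conjunction-if}: given $s_i \in T \downarrow \phi$ and $s'_j \in T' \downarrow \psi$, pick witnessing causes $\pi$ and $\pi'$ and invoke the universal quantification over $\mu$ in Lemma~\ref{lm:decomposing-conjunction-if} to select an interleaving passing through $s_i \Par s'_j$ (for instance, firing all prefix $A$-transitions of $\pi$ up to $s_i$ followed by the prefix $B$-transitions of $\pi'$ up to $s'_j$). Transitions are handled in the same style: each transition of a cause $\mu$ projects by label to a transition of a component cause and thus lies in $(T \downarrow \phi) \Par (T' \downarrow \psi)$; conversely, any local transition in $\pi$ or $\pi'$ can be embedded in some interleaving cause $\mu$ that fires it, witnessing its presence in $T \Par T' \downarrow (\phi \land \psi)$.

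The main technical obstacle will be the bookkeeping required to align the indexing of $\pi$, $\pi'$, and the reassembled $\mu$, and to verify that the ``for all $\mu$'' clause in Lemma~\ref{lm:decomposing-conjunction-if} really admits an interleaving hitting the prescribed state or transition. The disjointness $A \cap B = \emptyset$ is crucial here, since it guarantees that the labels of $\pi$ and $\pi'$ never conflict, so every shuffle of the two execution sequences constitutes a legitimate run of $T \Par T'$, giving complete freedom to route through any prescribed pair $s_i \Par s'_j$. With that flexibility, the double inclusion collapses to routine verifications and the equality of the two LTS's follows.
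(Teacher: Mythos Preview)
Your proposal is correct and follows essentially the same approach as the paper, which simply states that the result is immediate from Lemma~\ref{lm:decomposing-conjunction-if} and Lemma~\ref{lm:decomposing-conjunction-only-if}. You have merely spelled out the double-inclusion on states and transitions that the paper leaves implicit, and your use of the universal quantification over interleavings $\mu$ in Lemma~\ref{lm:decomposing-conjunction-if} to hit any prescribed pair $s_i \Par s'_j$ is exactly the right mechanism.
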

\begin{proof}
The result is immediate by Lemma~\ref{lm:decomposing-conjunction-if} and Lemma~\ref{lm:decomposing-conjunction-only-if}.
\end{proof}

For an example, we refer again to the LTS's in Figure~\ref{fig:ex-dec}.
The causal projection $T \Par T' \downarrow (\phi \land \psi)$ is defined by the dashed/dotted transitions
$s_0 \Par p_0 \dotarrow{d} s_0 \Par p_1 \xdasharrow{~~$a$~~~} s_1 \Par p_1  \xdasharrow{~~$e$~~~} s_1 \Par p_2$, $s_0 \Par p_0 \dotarrow{d} s_0 \Par p_1 \dotarrow{e} s_0 \Par p_2  \xdasharrow{~~$a$~~~} s_1 \Par p_2$ and $s_0 \Par p_0 \dotarrow{a} s_1 \Par p_0 \xdasharrow{~~$d$~~~} s_1 \Par p_1  \xdasharrow{~~$e$~~~} s_1 \Par p_2$. This is precisely the interleaving of the causal projections $T \downarrow \phi$ and $T'\downarrow \psi$.


\begin{remark}\label{rm:OC}
As pointed out in Section~\ref{sec:def-cause}, the proposed notion of causality does not check whether the order in which certain actions are executed is causal with respect to the violation of a safety property, or not.
Nevertheless, as already mentioned, for non-interleaved systems such orderings are implicitly captured by sequences $l_0 \ldots l_n$ determined by causal computations as in Definition~\ref{def:causality2}. 
Additionally, in the context of interleaved systems, the ordering information can be irrelevant.
For formulae defined over disjoint alphabets, based on the compositionality results in Theorem~\ref{thm:decomposing-disjunction-iso} and Theorem~\ref{thm:decomposing-conjunction-iso}, causal reasoning is "pushed" at the level of the interleaved components, hence the order in which these components execute the interleaving does not matter.
\end{remark}

\section{Conclusions and Future Work}\label{sec:conclude}

In this paper we introduce a notion of causality for LTS's and violation of safety properties expressed in terms of HML formulae.
The proposed notion of causality inherits the characteristics of "actual causation" proposed in~\cite{halpern2005causes,DBLP:conf/vmcai/Leitner-FischerL13} and, in addition, is compositional with respect to the interleaving of 
the considered type of non-communicating LTS's.

A natural extension is handling causality in the context of communicating LTS's in the style of CCS~\cite{DBLP:books/sp/Milner80}, for instance.
The challenge would be to establish (de-)compositionality results whenever the interleaved systems display internal, non-observable behaviour.
The current approach  relies on the fact that the HML formulae are defined over "observable", disjoint alphabets. 
However, the general modal decomposition theorems such as those proposed in \cite{Larsen91,Aceto12} do provide support for arbitrary formulae and silent actions. This provides an interesting ground to extend our approach to communicating processes.

Of equal importance is extending our framework to handle causality for liveness properties as well. This can be achieved via HML with recursion, which is again treated in modal decomposition approaches \cite{Aceto12}.

We would also like to investigate the benefits of casting causality within a process algebraic setting. Observe that, for instance, causal projections can be naturally expressed as CCS process terms derived from CCS terms for components or their underlying LTS's.
Hence, we would like to study whether a process algebraic handling of causality provide more insight on its properties and whether causality as described in this paper can be axiomatized. 

Last, but not least, we would like to investigate to what extent our definition of causality is related to the actual causality in~\cite{DBLP:conf/vmcai/Leitner-FischerL13,DBLP:conf/spin/BeerHKLL15}.
As already discussed in the current paper, the two notions share similar characteristics, including causal non-occurrence of events and the ordering condition (that is implicit in our approach).
Once such a relationship is identified, one could exploit the compositionality results to improve fault localisation in automated tools for causality checking~\cite{DBLP:conf/vmcai/Leitner-FischerL13,DBLP:conf/spin/BeerHKLL15}.
\vspace{-15pt}
\paragraph{Acknowledgements}{
We thank the anonymous reviewers of CREST 2016 for their constructive comments and references to the literature.
The work of Georgiana Caltais was partially supported by an Independent Research Start-up Grant founded by Zukunftskolleg at Konstanz University.
The work of
Mohammad Reza Mousavi has been partially supported by the Swedish Research Council (Vetenskapsr{\aa}det) award number: 621-2014-5057 (Effective
Model-Based Testing of Concurrent Systems) and the Swedish Knowledge
Foundation (Stiftelsen f{\"o}r Kunskaps- och Kompetensutveckling) in
the context of the AUTO-CAAS H{\"o}G project (number: 20140312).}

\vspace{-10pt}

\end{document}